\documentclass[sigplan,nonacm]{acmart}
\usepackage{multirow}
\usepackage[ruled,vlined,linesnumbered]{algorithm2e}
\usepackage{float}
\usepackage{stfloats}
\usepackage{subcaption}
\usepackage{enumitem}
\usepackage{paralist}
\usepackage{amsmath}
\usepackage{float}

\renewcommand\footnotetextcopyrightpermission[1]{}
\begin{document}
\title{Leveraging SIMD for Accelerating Large-number Arithmetic}

\author{Subhrajit Das}
\affiliation{
  \institution{Indian Institute of Technology, Gandhinagar}
  \country{India}
}
\email{subhrajit.das@iitgn.ac.in}

\author{Abhishek Bichhawat}
\affiliation{
  \institution{Indian Institute of Technology, Gandhinagar}
  \country{India}
}
\email{abhishek.b@iitgn.ac.in}

\author{Yuvraj Patel}
\affiliation{
  \institution{University of Edinburgh}
  \country{United Kingdom}
}
\email{Yuvraj.Patel@ed.ac.uk}

\begin{abstract}
  Large-number arithmetic, widely used in scientific computing and cryptography, has seen limited adoption of \emph{single instruction, multiple data} (SIMD) parallelism on modern CPUs due to the inherent dependencies in traditional algorithms. We present \texttt{Digits\-On\-Turbo} (DoT), which restructures the computation around independent, data-parallel operations, rather than vectorizing the standard algorithms, thereby leveraging the benefits provided by SIMD. Over prior SIMD implementations, DoT achieves up to $1.85\times$ speedups for addition and subtraction, and $2.3\times$ for multiplication. When integrated into state-of-the-art libraries, DoT yields up to $4\times$ speedup for addition and subtraction, and up to $2\times$ speedup for multiplication, cascading into end-to-end throughput gains of up to $19.3\%$ for scientific computations, and up to $7.9\%$ latency and $5.9\%$ throughput improvements on cryptographic implementations.
\end{abstract}

\maketitle
\section{Introduction}
\label{sec:intro}

\emph{Single instruction, multiple data} (SIMD) allows modern CPUs to exploit
data-parallel execution by running a single (same) instruction on multiple data
elements, simultaneously~\cite{flynn66, hennessy2011computer}. Different
applications like linear algebra libraries (e.g., OpenBLAS~\cite{openblas},
Intel MKL~\cite{intelMKL}), and computer vision and AI/ML
frameworks~\cite{armPL,ermig1979SimdLibrary} have adopted SIMD by using
specialized instructions to maximize CPU throughput. Interestingly, this has
not been the case with applications performing \emph{large-number arithmetic}
that involve operations on operands that range from hundreds to thousands of
bits. 

Large-number arithmetic is extensively used in high-preci\-sion scientific
computing~\cite{HPFloating, HPArithPhys, BAILEY201210106,
sagemathSageMathMathematical, wolframWolframMathematica,
sourceforgeMaximaBased, MATLABSymbolic}, and cryptographic
applications~\cite{MathematicsPublicKey, RSA, ECCC, ECDSA, frey2010arithmetic,
OpenSSL, Ersben, ibmCommonCryptographic, sslMinimumSize,
forbesCurrentEncryption, diffie2022new, nistKeySize}; however, the libraries
used for performing these operations do not capitalize on the benefits provided
by SIMD architectures. The GNU multiple-precision arithmetic library
(GMP)~\cite{gmplibBignumLibrary}, one of the most widely deployed
arbitrary-pre\-cis\-ion libraries, avoids the use of SIMD for accelerating
these operations because ``\textit{[SIMD does not provide] much support for
propagating the sort of carries that arise in GMP}''~\cite{gmplibAssemblySIMD}.

The cost of operating on large numbers shows up directly in workloads that
depend on these primitives. For instance, computing $\pi$ to $1M$ digits takes
$18\times$ longer than for $100K$ digits~\cite{gmplibGMPbenchResults}, and
decryption using a $4096$-bit RSA key takes $5\times$ more time than using a
$2048$-bit key~\cite{javamexLengths}. While Intel and AMD are progressively investing 
more die area to support wider SIMD units to expand the horizon of application
utilizing the data parallelism~\cite{intelAVX512, amd_epyc_5gen_hpc}, workloads 
involving large-number arithmetic operations are falling behind in terms of
performance due to their reliance on the scalar pipeline. 

Large-number arithmetic operations carry strong sequential dependencies between
data elements, or \emph{limbs}. For instance, when adding two $256$-bit numbers
(or $4$ limbs on a $64$-bit machine), each $64$-bit limb is added separately,
and the generated carry-bit is propagated to the next limb. As SIMD does not
propagate the carry-bit across data elements in the same run, the overhead 
increases. Prior work~\cite{Ren, numberworldIntegerAddition,
gueron2016accelerating} attempted to break these dependencies, but with limited
success. As we show later, when adding two large numbers, managing carry-bits
still requires $9\times$ to $12\times$ more time than the actual addition
operation~\cite{Ren, numberworldIntegerAddition}. Similarly, multiplying two
large numbers~\cite{gueron2016accelerating} introduces long read-after-write
chains, resulting in substantial throughput reduction.

Through quantitative and qualitative analysis, we observe that the
state-of-the-art approaches directly apply the existing sequential algorithm to
SIMD, thereby inheriting the original dependency structure. Furthermore, we
observe that the SIMD computation itself is never the performance bottleneck.
Instead, the overhead of preparing and routing data for the intermediate SIMD
operations imposed by the algorithm's dependencies erodes the potential
benefits of SIMD. 

In this paper, our objective is to address these shortcomings by asking the
question: \emph{Can we reconstruct the algorithms to compute the more
independent, data-parallel operations separately from the dependent operations,
thereby improving effective SIMD utilization?} We answer this question by
proposing \texttt{DigitsOnTurbo} (DoT), an approach that leverages SIMD for
accelerating large-number arithmetic operations. As the major bottleneck for
addition and subtraction is the propagation of carry-bits across limbs, we
divide the operation into four phases, such that the more common but faster
tasks like limb addition can be done in parallel, while isolating the more
performance-heavy but rarer task of handling cascading carries. For
multiplication, we use the \emph{vertical and crosswise} multiplication
technique~\cite{multStrategies, asmj2006, archiveVedicMathematics}, which
allows us to perform independent crosswise multiplication operations before
dependencies are accounted for.

We implement DoT in C with x86-64 SIMD intrinsics~\cite{intelIntrinsics}. For
addition and subtraction, DoT reduces the carry management overhead by over
$2\times$ compared with prior SIMD approaches \cite{Ren,
numberworldIntegerAddition}, delivering effective speedups of
up to $1.85\times$ over both those works and an optimized scalar
add-with-carry baseline on an Intel Emerald Rapids CPU. DoT multiplication
achieves a $2.31\times$ speedup over prior work~\cite{gueron2016accelerating}.

We integrate DoT into the GMP library (DoTMP) and OpenSSL (DoTSSL) by replacing
the libraries' addition and subtraction primitives and the base case
multiplication operation. On end-to-end benchmarks, DoTMP improves
GMPbench's~\cite{gmplibGMPbenchResults} overall score by $7.8\%$, with gains
cascading into other operations: division improves by $8.4\%$, $\pi$
computation by up to $19.3\%$, and multiplication by up to $48.7\%$ for the
larger operands. DoTSSL improves OpenSSL throughput by up to $5.9\%$ for FFDH
and $5.2\%$ for DSA, and reduces latency by up to $7.9\,\%$ for DSA, $6.1\,\%$
for FFDH, and $5.5\,\%$ for RSA.

In summary, the contributions of this work are:

\begin{compactitem}

  \item \textbf{DoT add/sub:} A 4-phase algorithm that reduces carry management
overhead by nearly $2\times$ and delivers strong SIMD speedups ($1.8\times$),
with the carry adjustment phase provably negligible for random inputs.

  \item \textbf{DoT multiplication:} A SIMD multiplication routine that exposes
all partial products as independent operations, eliminating RAW hazards and
delivering $2\times$ speedups for 256-bit multiplication.

  \item \textbf{End-to-end impact:} DoTMP improves GMPbench's overall score by
$7.8\%$ ($\pi$ up to $+19.3\%$, multiplication up to $+48.7\%$); DoTSSL
improves RSA, FFDH, and DSA throughput by up to $5.9\%$ and reduces latency by up to 
$7.9\%$. All gains come from $1,013$ lines of portable C with intrinsics,
$\sim$40 lines of integration changes across GMP and OpenSSL, and no
hand-written assembly.

\end{compactitem}


\section{Background and Motivation}
\label{sec:background}

\subsection{Large numbers and their representation}

Large numbers, a.k.a big numbers, refer to values beyond the standard 64-bit
capabilities of common programming languages. These numbers play a fundamental
role in scientific computing~\cite{HPFloating,HPArithPhys,BAILEY201210106},
cryptography~\cite{MathematicsPublicKey,ECCC,RSA,ECDSA,frey2010arithmetic,OpenSSL,Ersben},
and various mathematical software
tools~\cite{sagemathSageMathMathematical,wolframWolframMathematica,sourceforgeMaximaBased}.
Applications in these domains deal with operands ranging from a few hundred
bits to thousands of bits~\cite{ECCC, ibmCommonCryptographic, sslMinimumSize,
OpenSSL, diffie2022new, nistKeySize}.

As standard hardware cannot support operands wider than 64
bits~\cite{hennessy2011computer}, software-based approaches such as GNU
Multiple Precision Arithmetic Library (GMP)~\cite{gmplibBignumLibrary}, GNU
Multiple Precision Floating-Point Reliable library
(MPFR)~\cite{mpfrMPFRLibrary}, OpenSSL~\cite{OpenSSL}, Fast Library for Number
Theory (Flint)~\cite{Flint}, mpmath~\cite{mpmathMpmathPython},
Apfloat~\cite{apfloatApfloatArbitrary}, BigInt~\cite{mozillaBigIntJavaScript},
SageMath~\cite{sagemathSageMathMathematical} and
Mathematica~\cite{wolframWolframMathematica} represent large numbers as arrays
of fixed-size \emph{limbs}. A limb is typically a 32-bit or 64-bit unsigned
word aligned with the machine width. Two radix styles are common: a
\emph{saturated} representation uses radix $2^{32}$ or $2^{64}$ to maximize
bits per limb~\cite{Ren, Ersben}, while an \emph{unsaturated} (reduced-radix)
representation keeps headroom in each limb (e.g., base $2^{51}$ for
P-256~\cite{gueron2015fast}) to simplify carry handling and field-specific
reduction.

\subsection{SIMD and Large-number Arithmetic}
\label{sec:bg_simd_struggle}
\begin{table*}[ht]
\centering
\footnotesize
\resizebox{\linewidth}{!}{%
\begin{tabular*}{\linewidth}{@{\extracolsep{\fill}}l r  l r  l r  l r  l r@{}}
\toprule
\multicolumn{2}{c}{\textbf{Naive SIMD Add}} &
\multicolumn{2}{c}{\textbf{Ren et al.~\cite{Ren}}} &
\multicolumn{2}{c}{\textbf{Two-level KSA~\cite{numberworldIntegerAddition}}} &
\multicolumn{2}{c}{\textbf{Naive SIMD Mul}} &
\multicolumn{2}{c}{\textbf{Gueron et al.~\cite{gueron2016accelerating} (IFMA)}} \\
\cmidrule(r){1-2} \cmidrule(lr){3-4} \cmidrule(lr){5-6} \cmidrule(lr){7-8} \cmidrule(l){9-10}
Operations & \% & Operations & \% & Operations & \% & Operations & \% & Operations & \% \\
\midrule
Load                  &  2.6 & Load              &  5.8 & Load                &  8.0 & Radix conversion        &  8.9 & Radix conversion       &  6.7 \\
Add                   &  1.8 & Add               &  6.7 & Add                 &  8.5 & Load \& permute         & 12.7 & Load \& permute        &  9.7 \\
Initial carry detect  &  1.0 & Generate carry    & 69.5 & Generate carry      & 43.8 & IFMA compute            & 73.8 & IFMA compute           & 36.7 \\
Sequential carry prop & 92.8 & Add carry         & 13.5 & Add carry           & 33.6 & Store \& normalize      &  4.6 & Store \& normalize     & 46.9 \\
Store                 &  1.8 & Store             &  4.5 & Store               &  6.1 &                         &      &                        &      \\
\midrule
\textbf{Carry/Add ratio}& 52.1 &                   & 12.4 &                     &  9.1 &                         &   -- &                        &   -- \\
\bottomrule
\end{tabular*}%
}
\caption{\small{Operation-wise cycles breakdown (\%) of prior SIMD arithmetic routines: 512-bit addition and 256-bit multiplication. For addition routines, the carry-to-add overhead ratio (carry-handling cycles\,/\,addition cycles) is also shown. Carry-handling includes: initial carry detection, carry generation, sequential carry propagation, and add carry steps.}}
\label{tab:ablation-existing}
\end{table*}

On x86-64, SIMD has evolved from 128-bit SSE~\cite{intelSSE} to 256-bit
AVX2~\cite{intelAVX2} and 512-bit AVX-512~\cite{intelAVX512}. A single AVX-512
instruction processes eight 64-bit operands in parallel; modern CPU cores
provide multiple dedicated SIMD execution units~\cite{intel2024optimization},
delivering substantially higher data-parallel throughput. The workloads, where
SIMD has paid off, linear algebra, computer vision, AI/ML, and modular
arithmetic~\cite{intelMKL,openblas,armPL,gueron2012software,DidierModulo,Buhrow21,githubOpensslcryptoRsazExpX2cMaster},
share one property: their sub-tasks are largely independent and map cleanly
onto vector \emph{lanes}\footnote{Lanes are the independent 128-bit segments
that compose SIMD registers (1 for SSE, 2 for AVX2, and 4 for AVX-512), each
executing packed data elements in parallel without cross-lane dependencies.}.
Large-number addition, subtraction, and multiplication do not. Their textbook
formulations carry strong sequential dependencies between limbs; these
dependencies, not SIMD compute throughput, bound the achievable lane
parallelism. 

\noindent\textbf{Addition and Subtraction.} The textbook limb-by-limb addition
computes $S_i = A_i + B_i + C_{i-1}$~\cite{knuth1997art,brent2010modern}, where
$A_i$ and $B_i$ are the $i$-th limbs of the inputs, $C_{i-1}$ is the carry-in
from the previous limb, and $S_i$ is the resulting limb. The carry-out $C_i$ is
generated based on whether the sum exceeds the limb base, thereby creating a
carry chain where the carry-out at position $i$ becomes the carry-in at
position $i{+}1$. Thus, an $m$-limb addition reduces to a chain whose latency
grows linearly with $m$. On the scalar pipeline, the chain hides inside
hardware add-with-carry instructions
(\texttt{ADC}/\texttt{SBB}~\cite{intelSDM,armADDC}) and serves as the basis for
the hand-tuned assembly that ships with GMP and OpenSSL.

SIMD lanes, on the other hand, are isolated: AVX-512 does not propagate carries
across lanes natively~\cite{intelSDM,intelavx512perm}, and ARM SVE2's
\texttt{ADCLB}/\texttt{ADCLT}~\cite{armSVE2} only propagate a carry between
adjacent even--odd element pairs within a single instruction, so a full
$m$-limb carry chain still requires a sequence of dependent operations. A
direct SIMD port of the carry loop, therefore, has to rebuild the hardware
carry chain in software, using compares, masks, and conditional adds. In our
measurements (Table~\ref{tab:ablation-existing}), the naive SIMD approach has a
carry-to-add overhead ratio of $52.1$: for every cycle spent on actual
addition, over fifty cycles go to sequential carry propagation (full phase-wise
breakdown in Table~\ref{tab:ablation-existing}). Subtraction shows the same
pattern with borrows.

\noindent\textbf{Multiplication.} Multiplying two $m$-limb numbers needs $m^2$
word-level products $a_i \times b_j$ that have to be accumulated into a
$2m$-limb result. The two textbook organizations, row-wise schoolbook, and
column-wise Comba~\cite{knuth1997art,brent2010modern,comba90}, differ in how
they fold these products together, but both run the inner routine through a
single accumulator. For larger operands, libraries fall back to Karatsuba
(Algorithm~\ref{algo:karatsuba} in Appendix~\ref{sec:appendix-bg-hw}),
Toom-Cook, or FFT-based
methods~\cite{karatsuba1963multiplication,toom1963complexity,schonhage1971fast},
but the recursion eventually bottoms out at this same base-case routine.
Crucially, Karatsuba's identity for multiplying two limbs, $A$ and $B$: $A
\times B = (A_H \times B_H) \times \beta^2 + \bigl[(A_H{+}A_L)\times(B_H{+}B_L)
- A_H \times B_H - A_L \times B_L\bigr] \times \beta + A_L \times B_L$, where
  $\beta = 2^{k}$ is the limb base, trades one full-size multiplication for
three half-size ones plus \emph{additions and subtractions} at every level. As
recursion deepens, the aggregate addition/subtraction work grows rapidly;
faster add/sub therefore improves not only the primitives themselves but also
the recursive multiplication, and through it, division, modular exponentiation,
and computations like $\pi$. 

A direct SIMD port of the schoolbook inherits the inner-loop dependency: each
iteration broadcasts a $b_j$, performs a \emph{Fused Multiply-Add} (FMA) on the
resulting row into the accumulator, and has to finish that fold before the next
iteration can start. The routine is not throughput-bound on the multiplier, but
latency-bound on the long RAW chain through the accumulator, which leaves the
multiple available CPU FMA execution ports
underutilized~\cite{intelSDM}(detailed breakdown in
Table~\ref{tab:ablation-existing}). 

Appendix~\ref{sec:appendix-bg-hw} covers the scalar hardware and library
context (ALU width, hardware add-with-carry chains, base-case routines in GMP
and OpenSSL, and recursive multiplication thresholds) in detail.

\subsection{Prior SIMD-based works}
\label{sec:bg_existing}

Prior work attempted to break the dependency we discussed above. However, we
observe that across all solutions, the sequential chain is removed in one place
only to reappear, in a different form, somewhere else: usually at the
SIMD-scalar boundary or inside a shared accumulator.

\noindent\textbf{Carry-select.} Ren et al.~\cite{Ren} target cross-lane carry
propagation by simulating carry-select adders~\cite{carrySelect} to mitigate
the long carry dependency chain. Their method breaks down the addition of large
integers into smaller, parallel additions of 8-bit operands.  Each limb's
addition falls into one of three cases: no-carry, propagate, or generate,
determined solely by the smaller operands. The smaller operands are arranged to
match the carry behavior of the original operands, enabling fast determination
of carries across all limbs.

The mechanism is clever, but \emph{preparing} those packed states needs a
substantial amount of SIMD-to-scalar conversion. The resulting carry-to-add
overhead ratio is $12.4$ (Table~\ref{tab:ablation-existing}), meaning carry
management still dominates runtime by over an order of magnitude. While the
authors report gains on isogeny workloads, they do not directly compare their
approach against GMP or OpenSSL, while acknowledging the high overheads.

\noindent\textbf{Two-level Kogge-Stone.} The tool, y-cruncher~\cite{Ycruncher},
used for computing the value of mathematical constants like $\pi$ takes a
different route: a two-level Kogge-Stone-style addition~\cite{KoggeStone}.
While the original KSA is a recursive radix-2 approach with a depth of $\log n$
for $n$-bit operands, they have adopted a two-level approach with radices of
$\frac{n}{k}$ and $k$. In this approach, the operands are divided into $k$
groups, each of size $\frac{n}{k}$ bits, and processed independently at the
first level. The second level aggregates results across the $k$ groups to
compute final carry-bits and adjust sums using carry and max-sum information.
The second level sidesteps the per-limb serialization of the naive SIMD path;
however, the second-level resolution (carry preparation, mask manipulation, and
SIMD-to-scalar transitions) now dominates the runtime. The carry-to-add
overhead ratio is $9.1$ (Table~\ref{tab:ablation-existing}), which is better
than naive or Ren et al., but the carry chain's cost has migrated rather than
disappearing. 

\noindent\textbf{FMA schoolbook with a shared accumulator.} For multiplication,
prior SIMD work ranges from early SSE2 routines~\cite{bigmulSSE2} to
AVX-512IFMA\footnote{AVX-512IFMA (Integer Fused Multiply-Add) provides mainly
two instructions: \texttt{vpmadd52luq} and \texttt{vpmadd52huq}, for performing
integer fused multiply-add on 52-bit operands.}
implementations~\cite{keliris2014investigating,gueron2016accelerating,Edamatsu1,Edamatsu2}.
The state-of-the-art IFMA routine by Gueron et
al.~\cite{gueron2016accelerating} adapts the schoolbook organization to
AVX-512IFMA: each iteration broadcasts a $b_j$ limb, FMAs the resulting row
into a single shared accumulator vector, and drains one output limb per
iteration through \texttt{alignr\_epi64}. The scalar carry chain is gone: all
the work now stays in AVX-512 registers. But the accumulator is shared across
iterations, creating a long RAW dependency chain through madd52 $\to$ alignr
$\to$ madd52. Although newer x86-64 CPUs expose two IFMA execution ports
\cite{intel2024optimization}, only one can issue per cycle along this chain, so
the routine is latency-bound rather than throughput-bound, achieving an IPC of
only $4.2$ despite the availability of two IFMA ports. A full phase-wise
breakdown is in Table~\ref{tab:ablation-existing}. We could not directly
evaluate the other IFMA-based works~\cite{Edamatsu0,Edamatsu1,Edamatsu2} due to
the lack of source code and insufficient detail in the papers to reimplement
them faithfully.

\noindent\textbf{Key Takeaway.} Across all methods in
Table~\ref{tab:ablation-existing}, the SIMD compute phase is never what limits
performance. The surrounding work needed to feed the intermediate computations
around a dependency imposed by the sequential algorithm dominates the
computation. None of the prior work goes back and restructures the operation
around better independent, data-parallel work in the first place. For addition,
they have tried to eliminate the sequential carry dependency, either by paying
a heavy SIMD-to-scalar transition for preparing the carry states or by having a
complex two-level reduction, diminishing the SIMD benefits. On the
multiplication side, the dependency through the shared accumulator creates a
long RAW chain that starves the available CPU execution ports.

Two natural questions follow: (1) Are carry cascades common enough in practice
to justify paying the high overhead of complex carry-handling costs? (2) Can
multiplication's partial products be reorganized to eliminate the
shared-accumulator dependency?

\begin{table}[!t]
  \centering
  \small
\renewcommand{\arraystretch}{1.3}
\begin{tabular}{|c|p{0.8\columnwidth}|}
\hline
\textbf{Symbol} & \textbf{Definition} \\
\hline
$k$ & Bit-width of a single limb \\
$n$ & Total bit length of a large integer operand \\
$m$ & Total number of limbs needed to represent an operand\\
  $w$ & SIMD vector width (no. of limbs processed in parallel) \\
  \textbf{a, b, r, ...} & SIMD vector registers \\
  $A_i, B_i$ & $k$-bit values \\
\hline
\end{tabular}
\caption{\small{Notation used throughout the DoT algorithm descriptions. Boldface lowercase letters (\textbf{a}, \textbf{b}, \textbf{r}, \ldots) denote SIMD vector registers; uppercase subscripted letters ($A_i$, $B_i$) denote individual $k$-bit limb values.}}
\label{tab:notations}
\end{table}

\section{Design}
\label{sec:design}

We describe the design of \texttt{DigitsOnTurbo} (DoT), an SIMD-based algorithm
for large-number addition (and subtraction\footnote{Without loss of generality,
we focus on addition in this paper; differences for subtraction are noted where
they arise.}), and multiplication. In both cases, the design starts by
restructuring the computation around independent, data-parallel operations,
rather than vectorizing an existing sequential implementation, and builds SIMD
phases around that structure. Our design focuses on maximizing parallelism
while guaranteeing correctness under all inputs and microarchitecture
portability.

\begin{algorithm}[t]
\caption{DoT Addition}
\label{algo:dot_add}
\footnotesize
\SetAlgoLined

\KwData{Arrays $A$, $B$ each with $m$ limbs of $k$ bits}
\KwResult{Sum array $S$ and final carry $c_{out}$}

\textcolor{gray}{\tcp{$n$-bit addition: process limbs in chunks of $w$}}
\textbf{Function:} \textsc{DoT-Add-Words}$(A, B, m)$\\
\Begin{
  $c_{out} \leftarrow 0$\;
  \For{$i \leftarrow 0$ \textbf{to} $m - 1$ \textbf{step} $w$}{
    $(\mathbf{r}, c_{out}) \leftarrow \textsc{Add-W-Limbs}(A, B, i, c_{out}, w)$\;

    \textcolor{gray}{\tcp{Store Results}}
    $\texttt{simd\_store}(w \text{ limbs of }\mathbf{r} \text{ into } S[i] \text{ to } S[i+w-1])$\; \label{alg:store2}
    
  }
  \Return{$(S, c_{out})$}\;
}

\textcolor{gray}{\tcp{Function for adding $1 \leq w \leq 8$ limbs simultaneously}}
\textbf{Function:} \textsc{Add-W-Limbs}$(A, B, i, c_{in}, w)$\\
\Begin{
    $\mathbf{max} \leftarrow$ vector of $w$ limbs, each with value $2^{k}-1$\;

    \textcolor{gray}{\tcp{Phase 1: Parallel Addition}}
    $\mathbf{a} \leftarrow \texttt{simd\_load}(w \text{ limbs from } A[i] \text{ to } A[i+w-1])$\;\label{alg:ld1}
    $\mathbf{b} \leftarrow \texttt{simd\_load}(w \text{ limbs from } B[i] \text{ to } B[i+w-1])$\;\label{alg:ld2}
    $\mathbf{r} \leftarrow \texttt{simd\_add}(\mathbf{a}, \mathbf{b})$ \;\label{alg:addlimbs}

    \textcolor{gray}{\tcp{Phase 2: Carry Generation}}
    $\mathbf{c} \leftarrow \texttt{simd\_cmp\_lt}(\mathbf{r}, \mathbf{a})$\ \textcolor{gray}{\tcp*{Generate mask} \label{alg:cmplt1}}
    $c_{out} \leftarrow \mathbf{c} \gg (w-1)$\; \label{alg:extractcout1}
    $\mathbf{c} \leftarrow (\mathbf{c} \ll 1) \mid c_{in}$\; \label{alg:shiftorcin}

    \textcolor{gray}{\tcp{Phase 3: Carry Propagation}}
    $\mathbf{r'} \leftarrow \texttt{simd\_add\_carry}(\mathbf{r}, \mathbf{c})$\; \label{alg:addwithcarry1}
    $\mathbf{c} \leftarrow \texttt{simd\_cmp\_lt}(\mathbf{r'}, \mathbf{r})$\ \textcolor{gray}{\tcp*{Generate mask}} \label{alg:cmplt2}

    \If{$\mathbf{c} \neq 0$}{
        \textcolor{gray}{\tcp{Phase 4: Carry Adjustment}}
        $\mathbf{c} \leftarrow \mathbf{c} \ll 1$\; \label{alg:shiftleft}
        $\mathbf{m} \leftarrow \texttt{simd\_cmp\_eq}(\mathbf{r'}, \mathbf{max})$\ \textcolor{gray}{\tcp*{Generate mask}} \label{alg:cmpeq}
        $\mathbf{c} \leftarrow \mathbf{c} + \mathbf{m}$\; \label{alg:addmask}
        $c_{out} \leftarrow c_{out} \mid (\mathbf{c} \gg w)$\; \label{alg:extractcout2}
        $\mathbf{m} \leftarrow \mathbf{c} \oplus \mathbf{m}$\; \label{alg:xormask}
        $\mathbf{r'} \leftarrow \texttt{simd\_add\_carry}(\mathbf{r'}, \mathbf{m})$\; \label{alg:addwithcarry2}
    }

    \Return{$(\mathbf{r'}, c_{out})$}\;
}
\end{algorithm}

\subsection{DoT Addition and Subtraction}

As discussed earlier, addition requires carry propagation, which introduces
necessary dependencies and limits the SIMD utilization. Carefully analyzing the
random nature of large number inputs and the limb representation, we observe
that propagating carry-chains is rare in practice; a fact not accounted for by
prior work.

\begin{figure}[!htbp]
    \centering
    \includegraphics[width=0.45\textwidth]{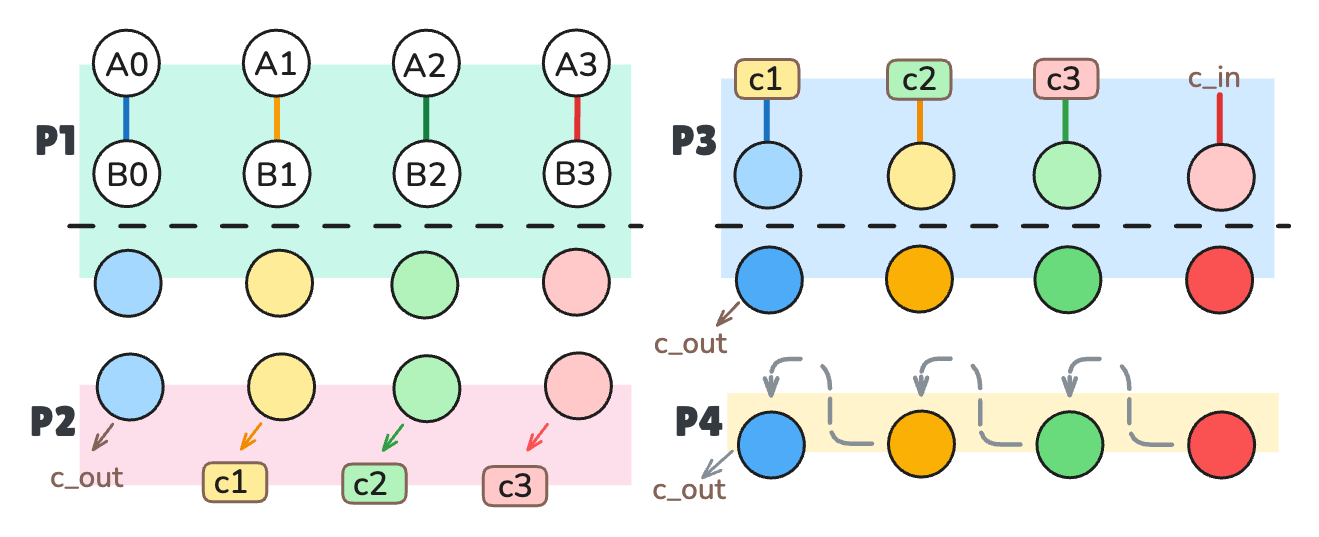}
    \caption{\small{Illustration of DoT addition for a 4-limb example. Phase 1 (P1) and Phase 3 (P3) perform \texttt{SIMD ADD} in parallel; Phase 2 (P2) generates and shifts carry-bits on scalar/mask registers; Phase 4 (P4) handles the rare carry-cascade case via the slow path.}}
    \label{fig:dot_addition}
\end{figure}

Using a \textit{limb-based approach}, addition can be limited to a single
neighboring carry-propagation. In other words, corresponding limbs of both
operands can be added in parallel, followed by the generated carry-bits to be
propagated to preceding intermediate sums in parallel, too. In the majority of
cases, propagating carry-bit to preceding  intermediate sums may \textit{not}
generate an additional carry-bit. A new carry-bit is generated only when the
earlier carry of 1 is added to an intermediate sum that equals the maximum
value for the base (e.g., $2^{64}-1$ for 64-bit limbs). Such propagation
cascades only when all intermediate limbs compute to the maximum base-value.
However, the probability of each of the sums computing to the maximum possible
value is quite less ($\frac{1}{2^k}$ where $k$ is the number of bits in a
limb), which allows us to avoid these cascading propagations in majority of the
cases and defer them to a slow path that only executes when such a cascade
occurs. Appendix~\ref{sec:appendix-carry} provides a formal analysis of the
same, showing that the probability of a carry cascade is negligible for random
inputs.

Considering the above observation, we propose a four-phase \textsc{DoT
Addition} algorithm. \textbf{Phase~1} (P1) performs limb-wise addition of $A_i$
and $B_i$ in parallel, producing intermediate sums without any carry
management. \textbf{Phase~2} (P2) detects the carries from P1 and shifts them
one position to align with the limbs they must propagate into, with the
top-limb carry extracted as $c_{out}$. \textbf{Phase~3} (P3) adds these aligned
carries (and any incoming $c_{in}$) to the intermediate sums in a single
parallel step. \textbf{Phase~4} (P4) handles the cascading case where P3 itself
generates a new carry by adjusting the sums with a carry mask.
Figure~\ref{fig:dot_addition} illustrates the flow for a 4-limb example.

Algorithm~\ref{algo:dot_add} shows the pseudocode. We walk through the
algorithm with $n{=}512$-bit operands and $k{=}64$-bit limbs, so $m{=}8$ limbs
fit in a single AVX-512 call with $w{=}8$. \textsc{DoT-Add-Words} calls
\textsc{Add-W-Limbs} for $A_0$--$A_7$, $B_0$--$B_7$, with $c_{in}=0$.

\noindent\textbf{Phase 1.} All eight limbs of $A$ and $B$ are loaded into SIMD
registers and added in parallel (lines~\ref{alg:ld1}--\ref{alg:addlimbs}),
computing $R_i = A_i + B_i$ for $i=0..7$ in one instruction. Deferring carry
detection ensures no inter-lane dependency.

\noindent\textbf{Phase 2.} A carry at position $i$ is detected by $c_i = (R_i <
A_i)$ via a SIMD compare across all eight lanes (line~\ref{alg:cmplt1}). The
carry out of the top limb ($c_7$) is saved as $c_{out}$
(line~\ref{alg:extractcout1}); the remaining bits are shifted left by one and
the incoming $c_{in}$ is inserted at bit 0 (line~\ref{alg:shiftorcin}), giving
$\mathbf{c}=c_6\ldots c_0\,c_{in}$ and aligning each carry-bit with the limb it
must propagate into.

\noindent\textbf{Phase 3.} The aligned carries are added to the intermediate
sums in parallel: $R_i' = R_i + c_i$ (line~\ref{alg:addwithcarry1}). The
comparison on line~\ref{alg:cmplt2} checks whether any of the carry additions
overflow. If not (the common case), the result is stored and the function
returns. Otherwise, Phase 4 handles the rare cascade.

\noindent\textbf{Phase 4.} The secondary carry-bits are shifted left
(line~\ref{alg:shiftleft}) and combined with a mask of limbs equal to $2^k{-}1$
(lines~\ref{alg:cmpeq}--\ref{alg:xormask}), using the carry-adjustment trick
from the Kogge-Stone adder~\cite{KoggeStone}. The adjusted carry is added back
into the sums (line~\ref{alg:addwithcarry2}) to give the final result. Because
Phase 4 operates on values that have already been carry-propagated in Phase 3;
the adjustment is bounded and correct.

When $m$ is not a multiple of $w$, the final call to \textsc{Add-W-Limbs} uses
masked SIMD load/store variants to process only the remaining limbs; smaller
operands can also use a narrower $w$ (e.g., $w{=}4$ for AVX2 or $w{=}2$ for
SSE) by selecting the corresponding intrinsic width.

\paragraph{How DoT Reduces Carry Overhead.} Prior SIMD-based solutions do not
handle common and rare cases separately; consequently, they pay heavy overhead
for carry management even when no cascade occurs. On the contrary, DoT's Phase
2 performs three simple tasks: detect the carry-bits using a SIMD compare,
shift the carry by one position to align with the limbs they must propagate
into, and extract the top-limb carry ($c_{out}$). For common cases, only the
first three phases are executed. Phase 4 is executed rarely when a carry
cascade arises. By isolating the carry propagation to a separate phase that
only executes when necessary, DoT significantly improves common case
performance. 

\paragraph{Proof of correctness.} We show that \textsc{Dot-Add-Words}
is correct and produces the same result as normal
addition with carry across limbs. Theorem~\ref{thm:correctness} states
the correctness theorem; the proof is provided in the Appendix.
\begin{theorem}[Correctness of DoT-Addition]
  \label{thm:correctness}
  Suppose two large integers $A$ and $B$,
  $A = \sum_{i=0}^{m-1} A_i X^i = A_{m-1}X^{m-1} + \dots + A_1X^1 +
  A_0X^0$,  $B = \sum_{j=0}^{m-1} B_j X^j = B_{m-1}X^{n-1} + \dots +
  B_1X^1 + B_0X^0$,  where $X = 2^k$ (where $k$ is the bit
  size of a limb), the number of limbs for the two integers is $m$,
  and $0 \le A_i, B_j < X$. \\
If $\textsc{DoT-Add-Words}(A,B,m) = (S, c_{\text{out}})$, then
 $S = \sum_{i=0}^{m-1} (r_i X^i)$ such that 
$r_i = (A_i + B_i + c_i) \pmod{X}, c_0 = 0$ and
$c_{i+1} = \left\lfloor (A_i + B_i + c_i)/X \right\rfloor$, 
and $c_{\text{out}} = c_m$.
\end{theorem}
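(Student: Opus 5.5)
Our plan is to reduce the theorem to the correctness of a single call to \textsc{Add-W-Limbs}, and then chain the calls by induction on the chunk index. Concretely, we will prove the following lemma. For limbs $A_i,\dots,A_{i+w-1}$ and $B_i,\dots,B_{i+w-1}$ and a carry-in $c_{in}\in\{0,1\}$, the call returns limbs $r'_j$ and a bit $c_{out}$ satisfying
\begin{equation*}
\sum_{j=0}^{w-1} r'_j X^j + c_{out}X^w = \sum_{j=0}^{w-1}(A_{i+j}+B_{i+j})X^j + c_{in}.
\end{equation*}
Given the lemma, induction over the loop in \textsc{DoT-Add-Words} is routine. Start with $c_{out}=0=c_0$. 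After the chunk beginning at $i$, the stored limbs together with the returned carry equal the exact sum of the two prefixes of length $i+w$. Because $0\le r'_j<X$, uniqueness of base-$X$ representation then forces $r'_j$ to coincide with the textbook $r_{i+j}$, and the returned bit to coincide with $c_{i+w}$. In particular the final $c_{out}=c_m$. The masked tail chunk is handled the same way with a smaller $w$.

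\textbf{Proving the lemma, common case.} We will track exact integer identities phase by phase. Phase~1 computes $R_j=(A_j+B_j)\bmod X$, and the comparison $R_j<A_j$ is exactly the overflow bit $g_j=\lfloor (A_j+B_j)/X\rfloor$. This gives $A_j+B_j=R_j+g_jX$. Phase~2 sets $c_{out}=g_{w-1}$ and forms the aligned vector $e_0=c_{in}$, $e_j=g_{j-1}$. Substituting, the right-hand side of the lemma equals $\sum_j (R_j+e_j)X^j + g_{w-1}X^w$. Phase~3 computes $R'_j=(R_j+e_j)\bmod X$, and $h_j=[R'_j<R_j]$ is its overflow. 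Overflow happens iff $e_j=1$ and $R_j=X-1$, in which case $R'_j=0$. If every $h_j=0$, the identity holds immediately with $r'_j=R'_j$. This is the case in which the algorithm returns without entering Phase~4.

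\textbf{Proving the lemma, Phase 4 (the main obstacle).} When some $h_j=1$, the remaining quantity to add is $\sum_j h_jX^{j+1}$. We must show that Phase~4's mask arithmetic adds this exactly. The value is a number whose "generate" bits are $h_j$, shifted up by one. Its "propagate" bits are the mask $p_j=[R'_j=X-1]$.

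The key step is to read the mask vectors as $w$-bit (or $(w{+}1)$-bit) integers. Let $G=\sum_j h_j2^{j+1}$ and $P=\sum_j p_j2^j$. The integer sum $G+P$ causes a binary carry into bit $j$ exactly when a secondary carry enters limb $j$ in the limb-level carry chain. That chain is governed by the same generate/propagate structure, because a carry entering limb $j$ continues onward iff $R'_j=X-1$.

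From this we will derive three facts:
\begin{itemize}
\item $(G+P)\oplus P$ has bit $j$ equal to the incoming carry $t_j$ into limb $j$.
\item Bit $w$ of $G+P$ is the carry leaving the top limb, so it is ORed correctly into $c_{out}$.
\item The final \texttt{simd\_add\_carry} of $t_j$ into $R'_j$ adds $t_j$ modulo $X$ and yields the correct residues.
\end{itemize}

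One compatibility check is needed. $h_j=1$ forces $R'_j=0$, so $p_j=0$ there, and the generate and propagate bits never collide. We will also argue that bit $w$ of $G+P$ and the Phase~2 bit $g_{w-1}$ cannot both be $1$, which justifies using OR for $c_{out}$. The reason is that the total is bounded by $2X^w-1$, so the combined top carry is a single bit.

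Verifying this correspondence between binary addition on masks and limb-level carry propagation is the delicate part of the proof. We would establish it by induction on $j$, comparing the binary carry recurrence $t_{j+1}=\mathrm{bit}_j(G)\lor(p_j\land t_j)$ with the limb recurrence.
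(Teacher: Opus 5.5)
Your proposal is correct, and it is organized differently from the paper's proof.

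\textbf{The paper's route.} The paper inducts on the number of limbs $n$ within a single call: base case $n=1$, then adding one limb at a time, ``w.l.o.g.\ $0<n<8$''. It uses Lemma~\ref{lem1} to justify the comparison-based carry detection. It does not analyze Phase~4 at all, deferring it to the correctness of the Kogge--Stone adder. The chaining of chunks through $c_{in}$ is left implicit.

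\textbf{Your route.} You prove an exact integer identity for one call of \textsc{Add-W-Limbs} with arbitrary $c_{in}\in\{0,1\}$. You then chain the calls by induction on the chunk index, and read off the textbook digits and $c_{out}=c_m$ from uniqueness of base-$X$ representation. For Phase~4 you verify the mask arithmetic $((h\ll 1)+p)\oplus p$ directly from the non-collision fact $h_j=1\Rightarrow R'_j=0\Rightarrow p_j=0$. You justify the OR into $c_{out}$ by the bound $2X^w-1$.

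\textbf{What each buys.} Your argument is self-contained and follows the algorithm's actual one-shot processing of $w$ limbs. It covers exactly the points the paper leaves to a citation or to ``w.l.o.g.'': the loop over chunks, a nonzero carry-in, Phase~4, and the OR. The paper's argument is shorter, but its limb-by-limb induction does not mirror the parallel computation.

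\textbf{One repair when you write it out.} With $G=\sum_j h_j2^{j+1}$, the limb recurrence for the total incoming carry is $t_{j+1}=h_j\lor(p_j\land t_j)=\mathrm{bit}_{j+1}(G)\lor(p_j\land t_j)$, not $\mathrm{bit}_j(G)\lor\cdots$. Also, the binary carry $\sigma_{j+1}=\mathrm{maj}(\mathrm{bit}_j(G),p_j,\sigma_j)$ of $G+P$ equals $p_j\land t_j$ only when $\mathrm{bit}_j(G)\land\sigma_j=0$. So carry the two invariants $\mathrm{bit}_j(G)\land\sigma_j=0$ and $\mathrm{bit}_j(G)\oplus\sigma_j=t_j$ together through the induction on $j$. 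The first is re-established at $j+1$ by $h_j\land p_j=0$.
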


\paragraph{Subtraction.} DoT subtraction mirrors addition, replacing carries
with borrows. Limbs are subtracted in parallel in Phase 1; borrow bits are
generated and aligned in Phase 2; borrows are propagated in Phase 3; and Phase
4 handles the case where a limb that reached zero in Phase 1 receives a borrow
in Phase 3, triggering further borrow generation. The probability of Phase 4 is
similar to the addition case.

\subsection{DoT Multiplication} 
\label{sec:urdhva}

Large-number multiplication presents a different challenge of computing and
accumulating partial products efficiently without serialization.  To overcome
the challenge, we adopt the \emph{vertical and crosswise} multiplication
technique~\cite{multStrategies, asmj2006, archiveVedicMathematics}, which
performs multiplication \emph{column-by-column}, i.e., for each output position
$c$, all cross-products $A_i \times B_j$, such that $i+j=c$, are computed and
summed together, i.e., for $c=0$, we have the cross-product
$A_0X^0. B_0X^0$, for $c=1$, we have the cross-product
$(A_1X^1. B_0X^0 + A_0X^0. B_1X^1)$ and so on. 

\begin{figure}[!htbp]
    \centering
    \includegraphics[width=0.45\textwidth]{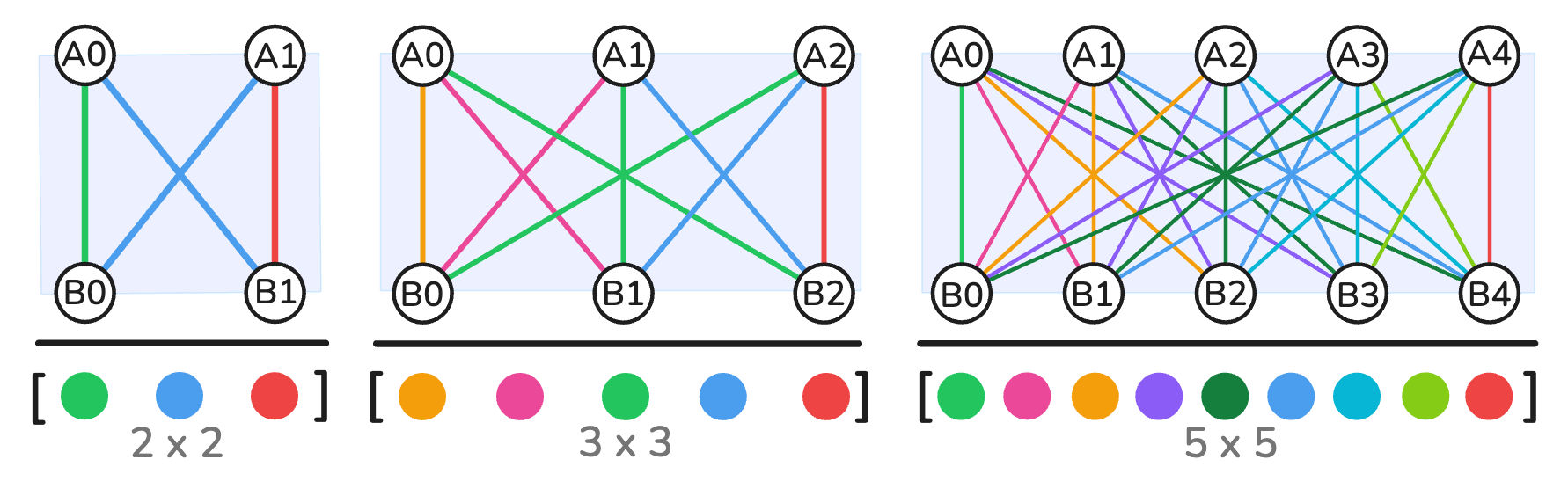}
    \caption{\small{``Vertical and Crosswise'' partial product
organization for $2{\times}2$, $3{\times}3$, and $5{\times}5$ limb
multiplication. Each line represents one cross-product $A_i \times B_j$; lines
of the same color belong to output column $c = i{+}j$ and are summed together.
A $2m{-}1$-column structure exposes all $m^2$ partial products as independent
computations.}}
    \Description{Diagram showing three multiplication grids for 2x2, 3x3, and
5x5 limb operands. In each grid, colored lines connect pairs of input limbs
A\_i and B\_j whose indices sum to the same output column c=i+j. Lines of the
same color represent independent cross-products that are accumulated together.
The number of output columns is 2m-1, and all m-squared cross-products can be
computed in parallel before any carry adjustment.}
    \label{fig:UT-Multiplication}
\end{figure}

Crucially, all cross-products are independent of one another, so they can all
be computed before any summation or carry adjustment.
Figure~\ref{fig:UT-Multiplication} illustrates this for $2{\times}2$,
$3{\times}3$, and $5{\times}5$ limb multiplications. A final carry-adjustment
pass propagates overflows from each column into the next to produce the final
product. Since all partial products are independent, $w$ limb pairs can be
gathered into SIMD registers per output column and computed in a single
instruction. A horizontal reduction sums the partial products per column,
followed by the carry-adjustment pass. The vertical and crosswise
multiplication is defined as:
{\setlength{\abovedisplayskip}{1pt}\setlength{\belowdisplayskip}{1pt}%
\begin{multline*}
\textsc{VnC}(A, B, n, k) = A_0\!\cdot\! B_0 + (A_12^k\!\cdot\! B_0 + A_0\!\cdot\! B_12^k) + \cdots \\
{} + (A_{n-1}2^{k(n-1)}\!\cdot\! B_{n-2}2^{k(n-2)} + A_{n-2}2^{k(n-2)}\!\cdot\! B_{n-1}2^{k(n-1)}) \\
{} + (A_{n-1}2^{k(n-1)}\!\cdot\! B_{n-1}2^{k(n-1)}).
\end{multline*}}

\textsc{DoT Multiplication} algorithm comprises five phases. The pseudocode is
provided in Algorithm~\ref{algo:ut_mul}. Phase~1 gathers the limb pairs for
each output column $c=i+j$ into flat arrays $M_a$ and $M_b$. 
Phase~2 computes every partial product in
parallel using SIMD vector multiplies with a zero accumulator, eliminating the
serialized FMA chain of prior IFMA routines~\cite{gueron2016accelerating} and
fully utilizing both IFMA execution ports. Phase~3 propagates each high half to
the next column's accumulator via a single flat-index traversal. Phase~4
performs a horizontal reduction per column, and Phase~5 executes the
carry-adjustment pass, which performs a single sequential pass over the $2m-1$
column sums.

\begin{algorithm}
  \footnotesize
  \SetAlgoLined
  \caption{DoT Multiplication}
  \label{algo:ut_mul}
  \KwData{Arrays $A$, $B$ each with $m$ limbs of $k$ bits}
  \KwResult{Product array $P = A \times B$}
  \BlankLine
  \textbf{Function:} \textsc{DoT-Mul-Words}$(A, B, m, k)$\\
  \Begin{
    $M_a[0 \dots m^2-1]$, $M_b[0 \dots m^2-1]$, $P\_lo[0 \dots m^2-1]$, $P\_hi[0 \dots m^2-1]$, $col\_lo[0 \dots 2m-1] \leftarrow 0$\; \label{alg:init}

    $idx\_a \leftarrow 0$, $idx\_b \leftarrow 0$\; \label{alg:initidx}

    \textcolor{gray}{\tcp{Phase 1: Gather limbs into columns}}
    \For{$c \leftarrow 0$ \KwTo $2m - 2$}{ \label{alg:col_loop}
      \For{all pairs $(i, j)$ such that $i + j = c$}{ \label{alg:pair_loop}
        $M_a[idx\_a++] \leftarrow A[i]$\; \label{alg:ga}
        $M_b[idx\_b++] \leftarrow B[j]$\; \label{alg:gb}
      }
    }

    \textcolor{gray}{\tcp{Phase 2: Compute partial products with SIMD}}
    \For{$i \leftarrow 0$ \KwTo $m^2 - 1$ \textbf{step} $w$}{
      $\mathbf{a} \leftarrow \texttt{simd\_load}(w \text{ limbs from } M_a[i] \text{ to } M_a[i+w-1])$\; \label{alg:lda}
      $\mathbf{b} \leftarrow \texttt{simd\_load}(w \text{ limbs from } M_b[i] \text{ to } M_b[i+w-1])$\; \label{alg:ldb}
      $\mathbf{p\_lo} \leftarrow \texttt{simd\_mul\_lo}(\mathbf{a}, \mathbf{b})$\; \label{alg:lp}
      $\mathbf{p\_hi} \leftarrow \texttt{simd\_mul\_hi}(\mathbf{a}, \mathbf{b})$\; \label{alg:hp}
      $\texttt{simd\_store}(\mathbf{p\_lo} \text{ into } P\_lo[i] \text{ to } P\_lo[i+w-1])$\; \label{alg:stlo}
      $\texttt{simd\_store}(\mathbf{p\_hi} \text{ into } P\_hi[i] \text{ to } P\_hi[i+w-1])$\; \label{alg:sthi}
    }

    \textcolor{gray}{\tcp{Phase 3: Align hi parts to neighboring columns}}
    $idx \leftarrow 0$\;
    \For{$c \leftarrow 0$ \KwTo $2m-2$}{ \label{alg:align_loop}
      $num\_pairs \leftarrow \min(c+1, m, 2m-1-c)$\; \label{alg:numpairs}
      \For{$p \leftarrow 0$ \KwTo $num\_pairs-1$}{ \label{alg:pair_align_loop}
        $t \leftarrow idx + p$\; \label{alg:pair_idx}
        $col\_lo[c]   \leftarrow col\_lo[c]   + P\_lo[t]$\; \label{alg:col_lo}
        $col\_lo[c+1] \leftarrow col\_lo[c+1] + P\_hi[t]$\; \label{alg:col_hi}
      }
      $idx \leftarrow idx + num\_pairs$\;
    }

    \textcolor{gray}{\tcp{Phase 4: Reduce partial products in each column}}
    \For{$c \leftarrow 0$ \KwTo $2m-1$}{
      $P[c] \leftarrow col\_lo[c]$\; \label{alg:col_init}
    }

    \textcolor{gray}{\tcp{Phase 5: Carry-over adjustment}}
    $\mathit{carry} \leftarrow 0$\;
    \For{$c \leftarrow 0$ \KwTo $2m-1$}{ \label{alg:carry_loop}
      $P[c] \leftarrow P[c] + \mathit{carry}$\;
      $\mathit{carry} \leftarrow P[c] \gg k$\;
      $P[c] \leftarrow P[c] \bmod 2^k$\; \label{alg:carry_adjust}
    }
    \If{$\mathit{carry} > 0$}{
      Prepend $\mathit{carry}$ as an extra limb to the result\; \label{alg:carry_prepend}
    }
    \Return{$P$}\;
  }
\end{algorithm}

We walk through Algorithm~\ref{algo:ut_mul} with $m{=}5$ and $k{=}52$: two
260-bit operands $A{=}A_4\ldots A_0$ and $B{=}B_4\ldots B_0$ produce a 520-bit
product $P$ across $2m{-}1{=}9$ columns with $1,2,3,4,5,4,3,2,1$ pairs per
column respectively (the triangular pattern in
Figure~\ref{fig:UT-Multiplication}). \textsc{DoT-Mul-Words} is invoked on
$A_0$--$A_4$, $B_0$--$B_4$ and returns a 10-limb product.

\noindent\textbf{Phase 1.} The gather loop
(lines~\ref{alg:col_loop}--\ref{alg:gb}) enumerates each output column $c$
and every $(i,j)$ with $i{+}j{=}c$, packing operands into flat arrays $M_a$ and
$M_b$ in column-wise groups. For $c{=}4$, it emits five pairs:
${(A_0,B_4),(A_1,B_3),(A_2,B_2),(A_3,B_1),(A_4,B_0)}$; after the loop, $M_a$ and $M_b$ each hold 25
entries with every cross-product exposed as an independent operand pair with no
dependencies between them. Intuitively, by restructuring the computation around
the output columns, all pairs contributing to a column can be computed in
parallel in Phase 2 directly without shuffling or permutating. 

\noindent\textbf{Phase 2.} The 25 entries are consumed in chunks of $w{=}8$ via
SIMD multiplies (lines~\ref{alg:lda}--\ref{alg:sthi}): three full iterations
cover 24 pairs plus one trailing pair handled as the remainder step. Each
vector issues \texttt{simd\_mul\_lo/simd\_mul\_hi} against a \emph{zero}
accumulator, producing eight independent low halves and eight high halves per
call. Because no iteration reads another's accumulator, both IFMA ports stay
busy in parallel with no RAW chain through the sequence, unlike Gueron et al.'s
shared-accumulator chain~\cite{gueron2016accelerating}. 

\noindent\textbf{Phase 3.} The alignment pass
(lines~\ref{alg:align_loop}--\ref{alg:col_hi}) folds each product into its
column: $P\_lo[t]$ is added into $col\_lo[c]$ while $P\_hi[t]$ is promoted into
$col\_lo[c{+}1]$. The pair $(2,3)$, for instance, deposits its low half at
column 5 and its high half at column 6. This phase is a simple flat loop with
no dependencies between iterations, and thus can be parallelized with SIMD or
left as a scalar pass. The result is that each column $c$ holds the full sum of
its cross-products, albeit with potential overflow above $2^k$ that must be
handled in Phase 5.

\begin{table*}[!htbp]
\centering
\footnotesize
\resizebox{\linewidth}{!}{%
\begin{tabular*}{\linewidth}{@{\extracolsep{\fill}}l r r  l r  l r@{}}
\toprule
\multicolumn{3}{c}{\textbf{DoT Addition ($m=8$, 64-bit limbs)}} &
\multicolumn{2}{c}{\textbf{DoT Mul ($m=4$, 64-bit limbs)}} &
\multicolumn{2}{c}{\textbf{DoT Mul ($m=5$, 52-bit limbs)}} \\
\cmidrule(r){1-3} \cmidrule(lr){4-5} \cmidrule(l){6-7}
Operations & Random \% & Pathological \% & Operations & \% & Operations & \% \\
\midrule
Load \textit{(Phase 1)}              & 18.7 &  7.3 & Gather \& radix conversion \textit{(Phase 1)} & 22.6 & Gather \textit{(Phase 1)}                   &  8.0 \\
Add \textit{(Phase 1)}               & 13.8 &  5.4 & IFMA compute \textit{(Phase 2)}               & 19.4 & IFMA compute  \textit{(Phase 2)}            & 22.9 \\
Generate carry \textit{(Phase 2)}    & 25.8 & 10.1 & Align hi parts \textit{(Phase 3)}              & 23.5 & Align hi parts \textit{(Phase 3)}            & 27.3 \\
Add carry \textit{(Phase 3)}         & 20.7 &  8.1 & Reduce columns \textit{(Phase 4)}             & 24.4 & Reduce columns \textit{(Phase 4)}            & 31.5 \\
Store \& check \textit{(Phase 3)}    & 21.0 &  8.2 & Carry-over \& store \textit{(Phase 5)}              & 10.1 &  Carry-over \& store \textit{(Phase 5)}          & 10.3 \\
Overflow handling \textit{(Phase 4)}          &  --- & 60.9 &                             &      &                           &      \\
\midrule
Carry/Add ratio                      & 4.9 & 16.2 &                             &  --- &                           &  --- \\
\bottomrule
\end{tabular*}%
}
\caption{\small{Phase-wise timing breakdown (\%) of DoT for 512-bit addition ($m{=}8$, 64-bit limbs) and 256-bit multiplication ($m{=}4$ and $m{=}5$ limbs). For addition, the random column shows percentages for random inputs (Phase~4 never fires); the pathological column shows percentages for pathological inputs (Phase~4 fires).}}
\label{tab:ablation-dot}
\end{table*}

\noindent\textbf{Phase 4.} Each column's accumulated value is moved into the
output array $P[c]$ (line~\ref{alg:col_init}).

\noindent\textbf{Phase 5.} A single scalar pass over the $2m-1{=} 9$ column
sums (lines~\ref{alg:carry_loop}--\ref{alg:carry_adjust}) propagates each
column's overflow above $2^{52}$ into the next, yielding the final 9-limb
product and an optional overflow limb (line~\ref{alg:carry_prepend}). Thus,
only Phase 5 is sequential; Phases~1--4 are fully data-parallel and costlier
serial work is limited to this short tail pass.

\paragraph{Proof of correctness.} The correctness of vertical and
crosswise multiplication is shown as
Theorem~\ref{thm:mul-correct}. Algorithm~\ref{algo:ut_mul} 
is an adaptation of the vertical and crosswise multiplication approach 
designed to work with the SIMD architecture and the available
registers. The proof is shown in the Appendix. 
\begin{theorem}[Correctness of vertical and crosswise multiplication]
  \label{thm:mul-correct}
  Suppose two large integers $A$ and $B$,
  $A = \sum_{i=0}^{m-1} A_i X^i = A_{m-1}X^{m-1} + \dots + A_1X^1 +
  A_0X^0$, $B = \sum_{j=0}^{m-1} B_j X^j = B_{m-1}X^{m-1} + \dots + B_1X^1 + B_0X^0$,
  where $X = 2^k$ (where $k$ is the bit size of a limb), the
  number of limbs for the two integers is $m$,  and $0 \le A_i, B_j < X$. 
  Then, $A \times B = \textsc{VnC}(A, B)$.
\end{theorem}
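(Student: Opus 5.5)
We prove the identity by expanding the product and regrouping its terms; all steps are finite algebra over the integers.

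\textbf{Step 1: make VnC explicit.} The displayed definition of \textsc{VnC} lists only the first and last columns, with ``$\cdots$'' in between. We first write it in closed form as
\[
\textsc{VnC}(A,B)=\sum_{c=0}^{2m-2}\;\sum_{\substack{0\le i,j\le m-1\\ i+j=c}} \bigl(A_iX^i\bigr)\bigl(B_jX^j\bigr),
\]
taking the paper's $n$ in the definition to be the number of limbs $m$. Each column $c$ collects exactly the cross-products $A_i\cdot B_j$ with $i+j=c$. This matches the gather loop of Phase~1 and Figure~\ref{fig:UT-Multiplication}: column $c$ has $\min(c+1,m,2m-1-c)$ pairs.

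\textbf{Step 2: expand the product.} Using the limb expansions in the hypothesis and distributivity,
\[
A\times B=\Bigl(\sum_{i=0}^{m-1}A_iX^i\Bigr)\Bigl(\sum_{j=0}^{m-1}B_jX^j\Bigr)=\sum_{i=0}^{m-1}\sum_{j=0}^{m-1}A_iB_jX^{i+j}.
\]

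\textbf{Step 3: regroup by $c=i+j$.} The map $(i,j)\mapsto(c,i)$ with $c=i+j$ is a bijection from $\{0,\dots,m-1\}^2$ onto the set of pairs satisfying $0\le c\le 2m-2$ and $\max(0,c-m+1)\le i\le\min(c,m-1)$. Every index pair therefore lands in exactly one column, with none counted twice or omitted. Since addition of integers is commutative and associative, the double sum equals $\sum_c X^c\sum_{i+j=c}A_iB_j$. This is exactly the closed form from Step~1, because $(A_iX^i)(B_jX^j)=A_iB_jX^{i+j}$.

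\textbf{Step 4: address the carries.} The column sums may exceed $X$. We note that the theorem concerns only the integer value $\sum_c S_cX^c$, where $S_c$ is the sum for column $c$, so normalization is not part of the claim. If one also wants to connect the theorem to Phase~5, a short induction suffices: at every step, the quantity $\mathit{carry}\cdot X^{c}+\sum_{c'\ge c}S_{c'}X^{c'}+\sum_{c'<c}P[c']X^{c'}$ stays invariant. This shows that carry normalization preserves the value and yields $2m$ limbs, each below $X$.

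\textbf{Main obstacle.} The only delicate point is Step~3: stating the column index bounds precisely, especially for $c\ge m$, so that the reindexing is visibly a bijection. We handle it by defining each column as a set of index pairs rather than by explicit endpoints. Everything else is routine distributivity.
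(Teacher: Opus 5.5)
Your proof is correct and follows essentially the same route as the paper's, which likewise expands $A\times B$ by distributivity into $\sum_{i}\sum_{j}A_iB_jX^{i+j}$ and identifies this with the column-grouped sum defining \textsc{VnC}. Your explicit reindexing bijection $(i,j)\mapsto(i+j,i)$ justifies the regrouping step that the paper only asserts, and your Phase~5 invariant goes beyond what the paper proves (your claim of exactly $2m$ limbs also needs the final carry to vanish, which holds because $A\times B<X^{2m}$).
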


\subsection{Implementation}

We implement DoT in C using x86-64 SIMD intrinsics~\cite{intelIntrinsics},
trading peak microarchitectural tuning for portability across compilers and
microarchitectures. Addition and subtraction support saturated radix ($k=64$,
base $2^{64}$), using masked load/store intrinsics to handle
non-multiple-of-$w$ limb counts, and can be adopted for SSE ($w=2$), AVX2 ($w=4$),
and AVX-512 ($w=8$) SIMD widths by selecting the corresponding intrinsics.

 Multiplication uses two fixed-size implementations,
$5{\times}5$ and $4{\times}4$, both realizing Algorithm~\ref{algo:ut_mul}'s
five phases. The $5{\times}5$ routine operates on unsaturated radix ($k=52$, base $2^{52}$), 
since AVX-512 IFMA instructions operate on 52-bit limbs. 
The $4{\times}4$ routine is designed for compatibility with the saturated radix ($k=64$,
base $2^{64}$) used by GMP 
and OpenSSL, and pays the extra cost of radix conversion packing at entry ($k=64 \to k=52$)
and unpacking ($k=52 \to k=64$) at exit.

We integrate DoT as a static library into the C pathways of GMP-6.3.0 (DoTMP)
and OpenSSL-3.5.2 (DoTSSL), replacing each library's add/sub primitives and
base-case multiplication; higher-level recursive multiplications automatically
invoke DoT as the base case. Unaligned SIMD load/store intrinsics throughout
preserve compatibility with both libraries' non-aligned memory layouts. 
We also test DoT on the integrated libraries' test suites, 
which cover a wide range of operand sizes and values, and a hefty set of higher-level applications 
(e.g., RSA, DSA) that rely on the primitives.
In total, DoT consists of $1013$ lines of C code, with $38$ and $43$ lines of
integration changes in GMP and OpenSSL, respectively\footnote{Our implementation,
library variants, and experimental setup are available at:
https://anonymous.4open.science/r/DigitsOnTurbo/.}.

\section{Evaluation}
\label{sec:evaluation}

We evaluate DoT along four axes: (i) reduction in carry-preparation overhead
for add/sub, (ii) scaling across SIMD widths ($w{=}2,4,8$) for add/sub, (iii)
performance of base-case multiplication, (iv) impact on higher-level
operations. 

We run the experiments on an Intel Xeon Gold 6548Y+ CPU (Emerald Rapids
microarchitecture) with 64 cores (two sockets, 32 cores per socket) @ 2.50 GHz,
256 GB of DRAM, and Ubuntu 22.04.2 LTS. We also run experiments on an Intel
Xeon Max 9462 (Sapphire Rapids, 64 cores @ 2.70 GHz, same OS and DRAM). We
report results from the 6548Y+ in the main paper and include results from the
9462 in the Appendix. We use \texttt{clang version 14.0.0-1ubuntu1.1} with
$-O2$ and $-march=native$ optimization flags to compile DoT.

To evaluate the accuracy and performance of DoT, we employ two sets of test
cases: \emph{random} and \emph{pathological}. Random cases mimic typical usage;
pathological cases target edge scenarios (full carry/borrow propagation,
maxed-out/zero limbs, frequent carries, frequent borrows, and mixed cases). We
use the Mersenne Twister~\cite{matsumoto1998mersenne} seeded with random
integers to generate the test cases. Every test run includes 100,000 random and
1,000 pathological test cases for twelve operand sizes from 512 to 32768 bits. 

Micro-benchmark results are reported as the mean over twenty runs per
experimental setup (a unique combination of operation and operand size), using
the $95\%$ confidence interval. We compute timing and throughput based on
GMPbench's methodology~\cite{gmplibGMPbenchResults} and CPU ticks via the RDTSC
instruction \cite{intel-white-paper-rdtsc}. Speedups, wherever reported, are
measured as the ratio of execution times, which are also verified using CPU
ticks and throughputs. Instruction counts, wherever reported, were measured
using the \texttt{perf\_event\_open} system
call~\cite{man7PerfEventOpen2Linux}. Unless otherwise noted, all compared
methods are evaluated in the same benchmarking harness with identical operand
sets, iteration counts, and measurement methodology (RDTSC and
\texttt{perf\_event\_open}); each implementation is validated against
precomputed expected outputs.

\subsection{Reduction of Carry Preparation Overhead}

Table~\ref{tab:ablation-existing} and Table~\ref{tab:ablation-dot} give the
full phase-wise breakdown for prior approaches and DoT, respectively. For
random inputs (Phase~4 never triggered across our entire test set), DoT
achieves a carry-to-add ratio of $\sim 4.9$, roughly half that of two-level KSA
($\sim 9.1$) and a third of Ren et al.\ ($\sim 12.4$). DoT also runs
$1.85\times$ faster overall, so the absolute carry cycles per limb are even
lower. For pathological inputs where Phase~4 fires, the ratio rises to $16.2$,
still well below naive SIMD ($52.1$).

Figure~\ref{fig:combined_microbench}(a) shows the performance comparison of DoT
(AVX512), two-level KSA adapted from y-cruncher's approach, and Ren et al.'s
method~\cite{Ren} for addition and subtraction on both randomly generated test
cases. Since Ren et al.\ did not release their implementation, we derived a
fair and AVX-512-intrinsics-based implementation of their \emph{ProposedAdd}
algorithm from the pseudocode provided in their paper. 

\begin{table}[!htbp]
\centering
\footnotesize
\begin{tabular*}{\linewidth}{@{\extracolsep{\fill}}lrrr@{}}
\toprule
\textbf{Implementation} & \textbf{Instructions} & \textbf{Avg.\ Cycles} & \textbf{IPC} \\
\midrule
\texttt{dot\_mul\_${5\times5}$} & 221 & 28.6 & 7.7 \\
\texttt{dot\_mul\_${4\times4}$} & 265 & 35.2 & 7.5 \\
OpenSSL \texttt{BN\_mul}     & 655 & 68.7 & 9.5 \\
Gueron \& Krasnov~\cite{gueron2016accelerating} & 345 & 81.5 & 4.2 \\
GMP \texttt{mpz\_mul}        & 870 & 88.4 & 9.8 \\
\bottomrule
\end{tabular*}
\caption{\small{Instruction counts, average cycle counts, and IPC for 256-bit multiplication. Cycle counts were obtained via \texttt{RDTSC}}}
\label{tab:256_mul}
\end{table}

\noindent\textbf{Randomly Generated Test Cases.} Compared to the Two-Level KSA,
DoT addition achieves a speedup of $1\times$ to $1.9\times$. For subtraction,
DoT achieves a speedup of $0.9\times$ to $1.9\times$. DoT, on average, reduces
instruction count by $15\%$ for addition and $16\%$ for subtraction. Compared
to Ren et al., DoT achieves a speedup of $1.4\times$ to $2.2\times$ and
$1.1\times$ to $1.8\times$ for addition and subtraction, respectively. DoT
reduces the instruction count by an average of $28.8\%$ for addition and
$17.3\%$ for subtraction, respectively. For addition and subtraction, DoT shows
modest speedups at small operand sizes due to fixed function-call overhead. As
operands grow, the overhead amortizes, and the lower per-limb computation cost
dominates.  

\noindent\textbf{Pathological Test Cases.} Due to space constraints, the plot
comparing the performance for pathological test cases is shown in the
Appendix~\ref{sec:appendix-er}. DoT achieves a speedup $0.7\times$ to $2\times$
against Two-level KSA and $0.8\times$ to $1.9\times$ against Ren et al.

\subsection{DoT's Performance over SIMD Widths}

We now compare the performance of three DoT SIMD variants ($w = 2, 4,$ and $8$)
with the scalar non-SIMD variant for addition and subtraction.
Figure~\ref{fig:combined_microbench}(b) shows these results. For comparing the
baseline scalar, we have used \texttt{\_addcarryx\allowbreak\_u64} and
\texttt{\_subborrow\allowbreak\_u64} intrinsics. With SSE ($w = 2$), we observe
a geometric mean speedup of $0.7\times$ for both addition and subtraction; at
lower SIMD width, the overhead outweighs the parallelism benefits. With AVX2
($w = 4$), the geometric mean speedups increase to $1.2\times$ for addition and
subtraction, respectively. The AVX512 ($w = 8$) variant achieves the highest
speedups, with geometric mean speedups of $1.8\times$ for addition and
subtraction, respectively. As stated earlier, for the lower range of the
operands ($512$--$4096$ bits), the speedups are more modest, while for larger
operands ($6144$--$32768$ bits), the performance gains are more significant and
consistent.

SSE and AVX2 \emph{increase} instruction count by $125\%$ and $25\%$ for
addition and $129\%$ and $29\%$ for subtraction over scalar, respectively.
AVX512 \emph{reduces} instruction count by $25\%$ for addition and $22\%$ for
subtraction, reaching $33\%$ and $30\%$ reductions for the largest operands.
These results reveal that SIMD parallelism and not instruction compactness
drives the speedup. The benefit of AVX-512 DoT is not that it issues fewer
instructions, but that each SIMD instruction operates on $w=8$ limbs at once.
SSE and AVX2 do not have enough lanes to amortize the carry-management
overhead; AVX-512 ($w=8$) crosses the threshold where lane parallelism
dominates the overhead.

\begin{figure*}[t]
    \centering
    \includegraphics[width=\linewidth]{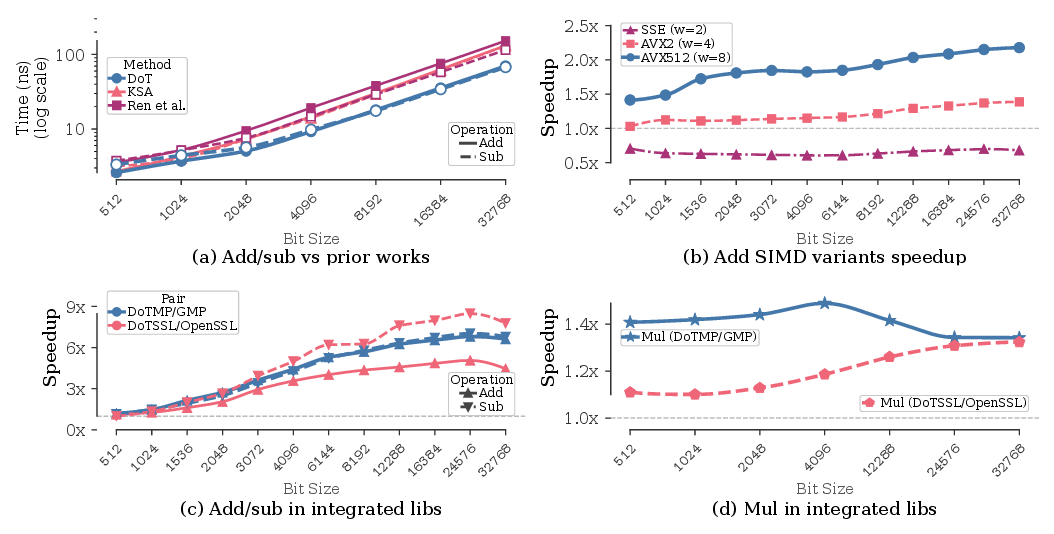}
    \caption{\small{Micro-benchmark evaluation of DoT across four axes. (a) Execution time (log scale) of DoT (AVX512), two-level KSA, and Ren et al.\ for add/sub across 512--32768-bit random operands. (b) Execution time speedup of DoT SIMD variants ($w{=}2,4,8$) over scalar add-with-carry. (c) Execution time speedup of DoTMP over GMP and DoTSSL over OpenSSL for add/sub. (d) Execution time speedup of DoTMP over GMP and DoTSSL over OpenSSL for multiplication.}}
    \Description{Combined 2x2 figure showing four micro-benchmark plots. Panel (a) shows execution time comparison where DoT achieves the lowest times. Panel (b) shows SIMD width scaling where AVX512 achieves 1.85x geomean speedup. Panel (c) shows library-level add/sub speedups reaching up to 8x. Panel (d) shows multiplication speedups of 1.41x for DoTMP and 1.20x for DoTSSL.}
    \label{fig:combined_microbench}
\end{figure*}

\subsection{256-bit Multiplication with DoT} 

Table~\ref{tab:256_mul} compares our 256-bit base case multiplication routines
($4\times4$ and $5\times5$) against GMP, OpenSSL, and Gueron and Krasnov's
AVX-512IFMA implementation~\cite{gueron2016accelerating}. 

DoT's $4\times4$ routine issues $23\%$ fewer instructions than
Gueron and Krasnov's achieving a $2.31\times$ speedup, driven primarily
by lower cycle count and higher IPC: $7.5$ for DoT compared to $4.2$ of Gueron
and Krasnov. Against OpenSSL and GMP, the same routine delivers speedups of
$1.95\times$ and $2.51\times$, respectively.

Since DoT's $4\times4$ routine converts the 64-bit limbs to a 52-bit
representation to better fit the IFMA instruction's 52-bit operand limit, it
incurs some overhead for the radix conversion and alignment of the hi parts,
which is reflected in the timing breakdown in Table~\ref{tab:ablation-dot}. The
primary reason for implementing the $4\times4$ routine is to have a direct
API-compatibility for the 256-bit multiplication base cases in GMP and OpenSSL,
which use 64-bit limb representation. 

Despite being faster, DoT spends significantly less time in the IFMA compute
phase ($19.4\%$) compared to Gueron and Krasnov's method ($36.7\%$)
(Table~\ref{tab:ablation-dot}): DoT's reorganization of the multiplication
routine allows it to better utilize the CPU's execution resources and reduce
dependency chains, leading to higher IPC and overall performance gains. While
Gueron and Krasnov target their multiplication routine for 1024-bit and above
operand sizes, we included their method for comparison since it is the only
publicly available AVX-512-based multiplication implementation; other
implementations~\cite{Edamatsu0,Edamatsu1,Edamatsu2} were not publicly
available and their papers did not provide sufficient details to reimplement
their algorithms for a fair comparison. Notably, all of the prior works on SIMD
multiplication keep GMP as the baseline for comparison, and also observe
speedups only beyond 1024-bit operands.

\subsection{DoT-integrated GMP and OpenSSL}

Figure~\ref{fig:combined_microbench}(c) shows the timing speedup of DoTMP over
GMP and DoTSSL over OpenSSL for addition and subtraction across 512--32768-bit
operands. For addition, DoTMP and DoTSSL achieve geometric mean speedups 
of $3.81\times$ and $2.95\times$, respectively, with larger gains at larger 
operand sizes. For subtraction, the geometric mean speedups are $3.73\times$ 
and $4.08\times$.

Instruction counts drop substantially: DoTMP sees reductions of $78.6\%$ and $77.4\%$ for addition and subtraction, while DoTSSL sees $65.9\%$ and $71.5\%$.

For multiplication (Figure~\ref{fig:combined_microbench}(d)), DoTMP achieves a
geometric mean speedup of $1.41\times$ over GMP across 512--32768-bit operands,
reducing instruction count by $47.3\%$ on average (up to $53.7\%$). DoT only
replaces the 256-bit base case (~$2.5\times$ faster for GMP), accounting
for ~50\% of total cycles at large operand sizes (the rest spent in
higher-level recursion), bounding the speedup to the 
$1.34$--$1.49\times$ range. DoTSSL achieves $1.20\times$ geometric mean speedup
over OpenSSL (ranging from $1.10\times$ to $1.32\times$), with instruction
count reductions averaging $41.6\%$. 

As we integrate DoT multiplication as the base case for 256-bit, multiplication
speedups are larger when operand bit sizes are powers of two and take the path
to the optimized 256-bit multiplication in GMP's Toom-Cook/FFT variants and
OpenSSL's Karatsuba implementation. Beyond 4096 bits, GMP switches to unequal
partitioning for Toom-Cook, which results in fewer 256-bit multiplications and
therefore smaller speedups from DoT multiplication. OpenSSL's Karatsuba
implementation, on the other hand, continues to use the same base case for all
operand sizes, resulting in consistent speedups for DoTSSL multiplication over
OpenSSL.

\subsection{DoT's Impact on Higher-Level Applications} \label{sec:eval-apps}

\noindent\textbf{GMPbench.} Figure \ref{fig:gmpbench_c} shows DoTMP's
improvement across the GMPbench suite. DoTMP improves the overall GMPbench
score by $7.8\%$. The multiply aggregate improves by $15.3\%$ (up to $48.7\%$
for $2,097,152 \times 2,097,152$-bit operands). Divide aggregate improves by
$8.4\%$ (up to $36.4\%$) even though DoT replaces no division routine, because
GMP's Newton-Raphson divider calls multiplication and add/sub in tight loops. 
Computation of $\pi$ gains $13.3\%$ (up to $19.3\%$ for 1M digits) because Chudnovsky
leans almost entirely on large-integer multiply and divide. Even GCD gains by
$3.1\%$ (up to $13.7\%$), since GMP's Lehmer-Euclid hybrid bottoms out in large
addition and subtraction. RSA improves by $4.2\%$ (up to $5.5\%$ for 512-bit
RSA). The GMPbench results expose a structural property of multi-precision
libraries: speeding up the primitives is not an isolated issue. A focused
improvement at the bottom of the stack propagates broadly across the suite.

\begin{figure*}[t]
    \centering
    \includegraphics[width=\linewidth]{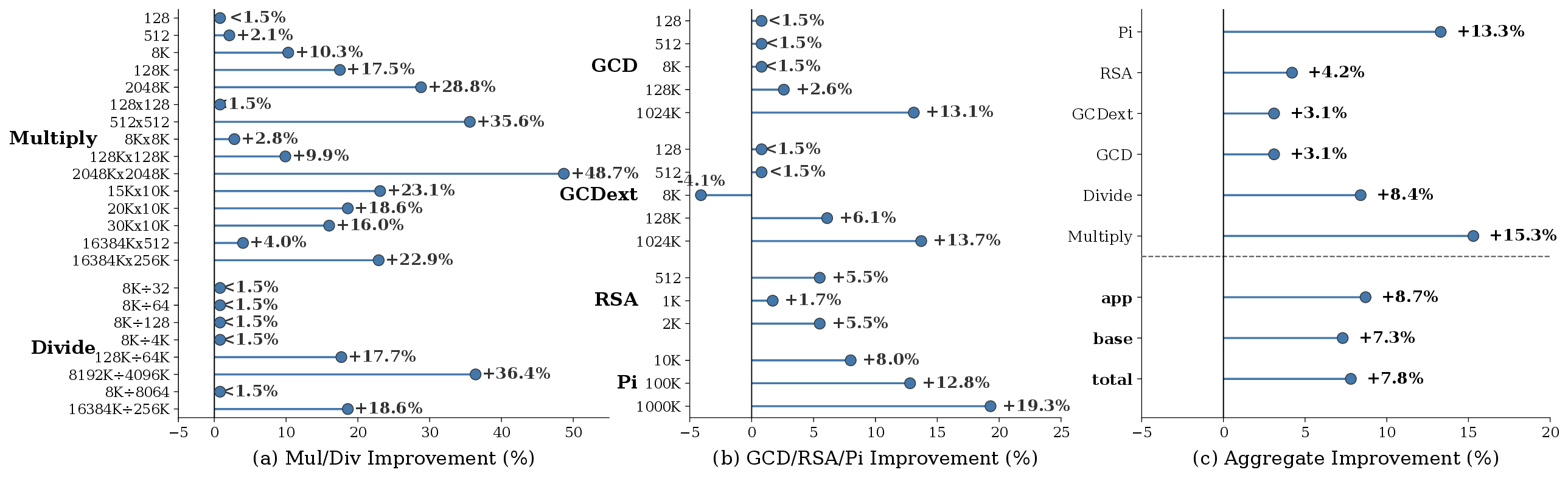}
    \caption{\small{DoTMP's score (throughput) improvement over GMP in
GMPbench.}}
    \Description{Bar chart showing DoTMP's percentage improvement over GMP for
each GMPbench workload category: multiply, divide, pi, GCD, GCDext, and RSA.
Each category has multiple bars for different operand sizes. Multiply and
divide show the largest gains (up to 48.7\% and 36.4\% respectively), $\pi$
reaches up to 19.3\%, while GCD, GCDext, and RSA show more modest improvements
of up to 13.7\% and 5.5\%.}
    \label{fig:gmpbench_c}
\end{figure*}

\noindent\textbf{OpenSSL.} Figure \ref{fig:openssl_speed_combined} shows
DoTSSL's throughput improvements on OpenSSL's speed benchmark for RSA, RSA KEM,
FFDH, and DSA. DoTSSL consistently improves throughput across all tested key
sizes and operations. For RSA (1024--7680-bit keys), improvements average
$3.2\%$ across sign, verify, encrypt, and decrypt. FFDH shows the most
consistent gains, averaging $4.4\%$ across group sizes (up to $5.9\%$ for
4096-bit groups). DSA improves by up to $5.2\%$ (verify, 2048-bit).

\begin{figure*}[t]
    \centering
    \includegraphics[width=\linewidth]{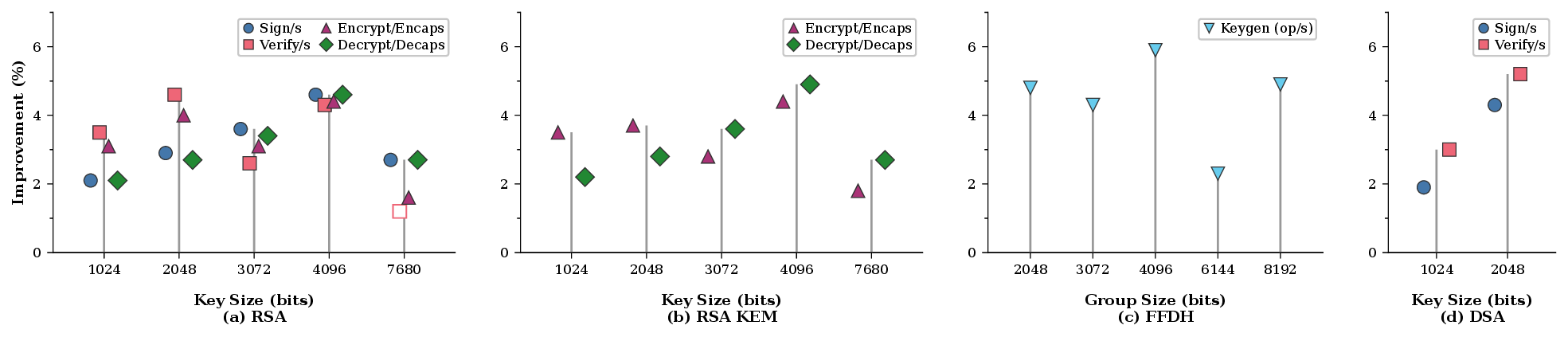}
    \caption{\small{DoTSSL throughput improvement (\%) over OpenSSL for RSA (sign/verify/encrypt/decrypt), RSA KEM (encaps/decaps), FFDH (keygen), and DSA (sign/verify) across standard key and group sizes.}}
    \Description{Line plot showing DoTSSL's throughput improvement percentage over OpenSSL across key and group sizes. Separate lines are drawn for RSA sign, verify, encrypt, and decrypt; RSA KEM encaps and decaps; FFDH; and DSA sign and verify. FFDH shows the largest and most consistent gains (averaging 4.4\%, peaking at 5.9\% for 4096-bit groups). RSA averages 3.2\% and DSA reaches up to 5.2\% for 2048-bit verify.}
    \label{fig:openssl_speed_combined}
\end{figure*}

\noindent\textbf{DoT's Contribution.} To understand the
contribution of DoT's addition, subtraction, and multiplication routines to the
observed application-level speedups, we performed an analysis of the cycles
spent (\%) in the GMPbench and OpenSSL speed benchmark
workloads for DoT's replaced routines: \texttt{dot\_add\_words},
\texttt{dot\_sub\_words}, and \texttt{dot\allowbreak \_mul\allowbreak \_4x4} (refer to
Figure~\ref{fig:dot_composition}). In GMPbench, DoT's routines account for a
significant portion of the cycles in the multiply, divide, and $\pi$ workloads
for larger operand sizes, which explains the larger speedups observed in these
workloads. In OpenSSL speed, DoT's routines contribute a smaller but still
meaningful share of cycles across RSA, FFDH, and DSA benchmarks, consistent
with the more modest speedups observed there.

\begin{figure*}[t]
    \centering
    \includegraphics[width=\linewidth]{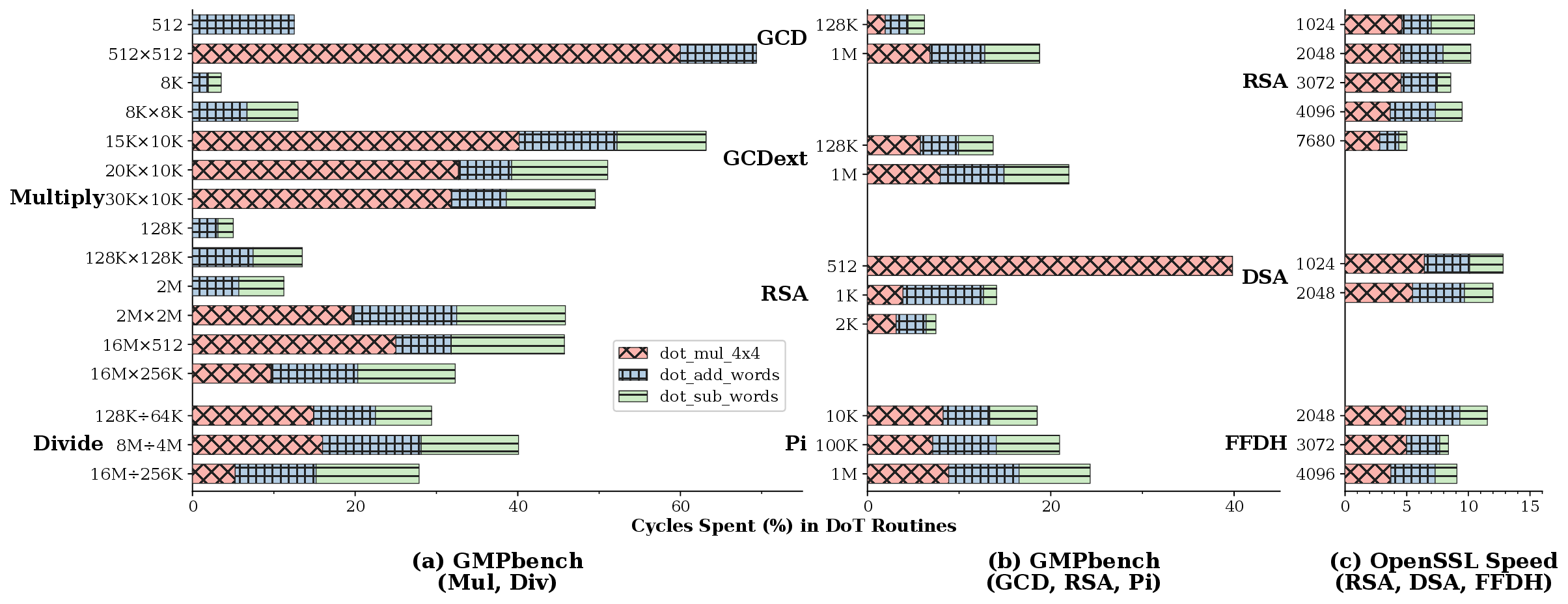}
    \caption{\small{Cycle spent (\%) by DoT's \texttt{dot\_add\_words},
\texttt{dot\_sub\_words}, and \texttt{dot\_mul\_4x4} routines in GMPbench and
OpenSSL speed workloads, measured via \texttt{perf}. We omitted a handful of
cases in the GMPbench (e.g., lower sized mul, div and gcd) since they spend
zero cycles in DoT routines.}}
    \Description{Stacked or grouped bar chart showing the percentage of total
cycles spent in DoT's dot\_add\_words, dot\_sub\_words, and dot\_mul\_4x4
routines for each GMPbench workload (multiply, divide, $\pi$, GCD, GCDext, RSA)
and each OpenSSL speed benchmark (RSA, FFDH, DSA) at various operand or key
sizes. Larger operand sizes in GMPbench multiply, divide, and $\pi$ workloads
show the highest cycle shares, explaining the larger application-level speedups
in those categories.}

    \label{fig:dot_composition}

\end{figure*}

\section{Discussion}\label{sec:discussion}

\noindent\textbf{Futuristic Hardware.} Existing SIMD support comes with a few
limitations. DoT is designed around these limitations to ensure the parallelism
offered by SIMD is fully utilized. (i) The scalar add-with-carry (ADC/SBB)
instruction eliminates all the explicit carry-handling steps.  However, in
SIMD, none of the ISAs have native carry propagation across lanes, so all the
approaches, including DoT, manage carries explicitly. ARM SVE2 in particular
exposes native carry-generation instructions
(\texttt{ADCLB}/\texttt{ADCLT}~\cite{armSVE2}) that map directly to DoT Add
Phase~2, suggesting an SVE2 port would make Phase 2 cheaper than the AVX-512
version. (ii) Currently, the SIMD multiplication instruction sets (e.g.,
AVX-512F), which work with 64-bit limbs, only provide the lower 64-bit product
of the 128-bit result. On the other hand, AVX-512IFMA provides a fused
multiply-add instruction that operates on lower 52-bit limbs, and requires two
separate instructions to get the full 104-bit product.

As the hardware evolves and new features are added to SIMD, such as
add-with-carry chaining or \texttt{mulx}-like wide-multiply instruction that
gives the full product in one step, approaches like DoT that restructure
algorithms around independent, data-parallel operations would further reduce
the overhead and make SIMD usage more attractive for large-number libraries.
As DoT uses C with AVX-512 intrinsics and no microarchitecture-specific assembly,
future compiler improvements would benefit DoT without code changes.
AVX10~\cite{intelAVX10}, a successor to AVX-512, preserves the intrinsics of
AVX-512, DoT will run unchanged on upcoming Intel P- and E-cores.

\section{Conclusion}
\label{sec:conclusion}

Large-number addition and subtraction remain unaccelerated by SIMD in every
major multi-precision library, despite being the foundation of cryptography and
high-precision computing. We present \texttt{DigitsOnTurbo} (DoT): a 4-phase
SIMD add/sub algorithm that breaks the sequential carry chain by deferring rare
cascades to a slow path, and a ``vertical and crosswise''-based multiplication
routine that exposes all $m^2$ partial products as independent IFMA work. Both
are implemented in portable C with AVX-512 intrinsics, no hand-tuned assembly.
On Emerald Rapids ($w{=}8$), DoT add/sub reach $1.85\times$/$1.84\times$ over
scalar add-with-carry and DoT multiplication reaches $2.3\times$ over the
AVX-512IFMA baseline for 256-bit operands. Integrated into GMP (DoTMP) and
OpenSSL (DoTSSL), these gains propagate end-to-end: GMPbench's overall score
improves by $7.8\%$ ($\pi$ up to $+19.3\%$, multiply up to $+48.7\%$), and OpenSSL
throughput for RSA, FFDH, and DSA improves by up to $5.9\%$. The broader
takeaway is that SIMD adoption in large-number arithmetic has been blocked by
algorithmic structure, not by hardware. Restructuring around independent and
data-parallel operations is what unlocks the gains offered by SIMD.

\bibliographystyle{ACM-Reference-Format}
\bibliography{citations}

\clearpage
\appendix

\section{Hardware and Library Context for Limb-level Arithmetic}
\label{sec:appendix-bg-hw}

This appendix expands on the scalar hardware support and library-level context that Section~\ref{sec:bg_simd_struggle} only summarizes.

\paragraph{ALU width and word-local optimizations.}
On most CPUs the arithmetic logic unit (ALU) handles fixed-size operands of 32 or 64 bits. Within a single word, hardware designs such as carry-lookahead~\cite{Macsorley61} and carry-select logic~\cite{carrySelect}, as well as parallel-prefix trees (Kogge-Stone~\cite{KoggeStone}, Sklansky~\cite{sklansky1960conditional}, Brent-Kung~\cite{brent1982regular}), reduce the bit-level carry delay from ripple-carry $O(n)$ toward prefix-style $O(\log n)$~\cite{hennessy2011computer}. Modern integer pipelines hide much of this work in practice. The catch is that all of these optimizations apply only \emph{within} a single machine word, not across the multiple words that make up a large-number operand.

\paragraph{Limb-by-limb addition with hardware carry chains.}
For large-numbers, the operation has to be decomposed into a sequence of word-sized limb additions~\cite{brent2010modern}. For example, adding two 512-bit numbers on a 64-bit architecture takes eight 64-bit limb additions. Each one is fast on its own, but a sequential dependency creeps in: the carry-out of each addition has to flow into the next as carry-in, which brings the linear $O(m)$ delay back at the limb level. The textbook software approach computes $S_i = A_i + B_i + C_{i-1}$~\cite{knuth1997art,brent2010modern}; Algorithm~\ref{algo:gmp_add} (MPN-ADD-M) shows GMP's realization of this idea. To avoid manual carry detection and propagation, low-level assembly uses the add-with-carry chaining technique~\cite{intelSDM,armADDC}: the least significant limb addition starts with the \texttt{ADD} instruction (which sets the hardware carry flag), and subsequent limbs use \texttt{ADC} instructions that add both the limb operands and the incoming carry flag. Subtraction uses \texttt{SBB} instructions in the same way. This pattern is widely deployed in hand-optimized GMP and OpenSSL assembly, because mainstream compilers still do not reliably synthesize equivalent carry chains from straightforward high-level C loops~\cite{gnuGCCCompiler,llvmaddwithoverflow}.

\begin{algorithm}
\caption{Sequential Addition (GMP-style)}
\label{algo:gmp_add}
\footnotesize
\SetAlgoLined

\KwData{Integers $A$, $B$ each with $m$ limbs}
\KwResult{Sum $R$ and final carry $c_{out}$}

\textcolor{gray}{\tcp{Sequential limb addition with carry propagation}}
\textbf{Function:} \textsc{MPN-ADD-M}$(R, A, B, m)$\\
\Begin{
  $c_{in} \leftarrow 0$\;

  \For{$i \leftarrow 0$ \textbf{to} $m-1$ \textbf{step} $1$}{
    $R[i] \leftarrow A[i] + c_{in}$\; \label{line:add_carry_start}
    $c_{out} \leftarrow (R[i] < c_{in})$ ? $1$ : $0$\;
    $R[i] \leftarrow R[i] + B[i]$\;
    $c_{out} \leftarrow c_{out} + ((R[i] < B[i])$ ? $1$ : $0)$\; \label{line:add_carry_end}
  }
  \Return{$c_{out}$}\;
}
\end{algorithm}

\begin{algorithm}
\caption{Karatsuba Multiplication (or, Toom-Cook 2-way)}
\label{algo:karatsuba}
\footnotesize
\SetAlgoLined

\KwData{Integers $A, B$ with $m$ limbs}
\KwResult{Product $P = A \cdot B$}

\textbf{Function:} \textsc{MPN-KARATSUBA}$(A, B, m)$\\
\Begin{
  \If{$m \le \theta$}{ \label{line:karatsubA_basecase}
    \Return{\textsc{MPN-MUL-BASECASE}$(A, B, m)$}\;
  }
  $k \leftarrow m/2$, $b \leftarrow 2^{k \cdot \text{bits\_per\_limb}}$\; \label{line:karatsubA_split1}
  $A_H, A_L \leftarrow \text{split } A \text{ at } k$;\quad $B_H, B_L \leftarrow \text{split } B \text{ at } k$\; \label{line:karatsubA_split2}
  $P_1 \leftarrow \textsc{MPN-KARATSUBA}(A_H, B_H, k)$\;\label{line:karatsubA_high}
  $P_0 \leftarrow \textsc{MPN-KARATSUBA}(A_L, B_L, k)$\; \label{line:karatsubA_low}
  $s_x, s_y \leftarrow \text{sign of} (A_H - A_L), \text{sign of} (B_H - B_L)$\; \label{line:karatsubA_sign}
  $P_{\Delta} \leftarrow \textsc{MPN-KARATSUBA}(|A_H - A_L|, |B_H - B_L|, k)$\; \label{line:karatsubA_mid}
  $P_2 \leftarrow (s_x \cdot s_y) P_{\Delta}$\; \label{line:karatsubA_mid_sign}
  \Return{$(b^2 + b)P_1 - b P_2 + (b + 1)P_0$}\; \label{line:karatsubA_combine}
}
\end{algorithm}

\paragraph{Hardware multipliers and base-case multiplication routines.}
Arbitrary-precision multiplication builds on the same fixed-size hardware multipliers. A modern 64$\times$64-bit unit uses partial-product generation, booth's encoding \cite{booth1951signed}, compressor-tree reduction (typically Wallace or Dadda trees \cite{wallace2006suggestion,dadda1965some}), and a final carry-propagate addition, all within a few cycles of latency~\cite{hennessy2011computer,agnerInstructionTables}. Many implementations also use Booth-style recoding to cut down on partial products before reduction~\cite{hennessy2011computer}. These datapath optimizations help single-limb multiply throughput but do nothing about the cross-limb accumulation dependencies that arise in arbitrary-precis\-ion multiplication. For small sizes ($m = 2, \ldots, 8$), libraries primarily use schoolbook and Comba~\cite{knuth1997art,comba90,brent2010modern}: schoolbook accumulates row-wise, while Comba accumulates column-wise to shorten carry chains. Assembly implementations commonly use \texttt{mulx}, which computes the full 128-bit product of two 64-bit limbs and returns both halves in a single instruction.

\paragraph{Recursive multiplication thresholds.}
For larger operan\-ds, most libraries fall back to Karatsuba's~\cite{karatsuba1963multiplication,brent2010modern} divide-and-conquer approach (Algorithm~\ref{algo:karatsuba}), which uses three half-size multiplications instead of four and brings the complexity down to $O(m^{\log_2 3}) \approx O(m^{1.585})$. GMP further switches to Toom-Cook~\cite{toom1963complexity} (a generalization of Karatsuba) and to FFT-based multiplication~\cite{schonhage1971fast} for very large operands, while OpenSSL typically only goes as far as Karatsuba beyond the base case. The exact switch points depend on the implementation, the operand size, and the target architecture. When the recursion reaches its base case (Line~\ref{line:karatsubA_basecase}), the library invokes schoolbook or Comba. Crucially, because Karatsuba and Toom-Cook trade multiplications for additions and subtractions at every recursive node, the aggregate add/sub work grows quickly with recursion depth. So a faster base-case multiply \emph{and} a faster add/sub both feed into every higher-level operation built on top: division, modular exponentiation, and computations like $\pi$.

\section{Proof of Correctness and Carry Probability Analysis}
\label{sec:appendix-carry}

This appendix formalizes the correctness of DoT's addition and multiplication. Also, it analyzes the probability of carry generation in large-number addition, which motivates DoT's design choice to isolate carry handling into a separate phase that only triggers on rare carry cascades.

\paragraph{Proof of correctness of DoT's addition algorithm}
Let two large integers $A$ and $B$ be represented in radix $X = 2^k$ (where $k$ is the bit size of a limb) and suppose that the number of limbs for the two integers is $n$:
$$A = \sum_{i=0}^{n-1} A_i X^i = A_{n-1}X^{n-1} + \dots + A_1X^1 + A_0$$
$$B = \sum_{j=0}^{n-1} B_j X^j = B_{n-1}X^{n-1} + \dots + B_1X^1 + B_0$$
where $0 \le A_i, B_j < X$.
The sum of these two numbers using schoolbook addition is given by:
$$S = \sum_{i=0}^{n-1} (s_i X^i) + c_{n}X^{n}$$ such that 
$s_i = (A_i + B_i + c_i) \pmod{X}, c_0 = 0$ and
$c_{i+1} = \left\lfloor (A_i + B_i + c_i)/X \right\rfloor$

To show that Algorithm~\ref{algo:dot_add} is correct, we need to show that the output of the algorithm is same as the sum computed above or $$S=\sum_{i=0}^{n-1} (s_i X^i)$$ and $c_{out} = c_n$.
Since the maximum value of a limb ($A_i, B_i$)  is $X-1$, the maximum sum of two limbs and a carry is: $(X-1) + (X-1) + 1 = 2X - 1$.
This means that $c_{i+1}$ can never exceed $\lfloor (2X-1)/X \rfloor = 1$. Also, when $s_i < A_i$ or $s_i < B_i$, it means that a carry is generated because the sum of two numbers is always greater than the individual numbers. 
\begin{lemma}[Carry generation exceeds limb size]
 \label{lem1}
  The sum of two $n$-bit integers $a$ and $b$ generates a carry, if and only if the least significant $n$ bits of the sum are less than both $a$ and $b$.
  In other words, if $a+b \geq a$, then $a+b \geq b$ and no carry-bit is generated. 
\end{lemma}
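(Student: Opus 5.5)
We write $s = (a+b) \bmod 2^n$ for the $n$-bit result of the addition. The carry-out is $c = \lfloor (a+b)/2^n \rfloor \in \{0,1\}$, since $a+b \le 2^{n+1}-2$. Hence $a+b = s + c\,2^n$ holds exactly over the integers. We prove both directions of the equivalence from this identity.

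\emph{Carry implies $s$ is below both operands.} Suppose $c=1$. Then $s = a+b-2^n$. Because $b \le 2^n-1$, we have $b - 2^n \le -1$, so $s \le a-1 < a$. The same argument with the roles of $a$ and $b$ swapped gives $s < b$. Thus a carry forces $s < a$ and $s < b$.

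\emph{No carry implies $s$ is at least both operands.} Suppose $c=0$. Then $s = a+b$. Since $b \ge 0$, we get $s \ge a$, and since $a \ge 0$, we get $s \ge b$. Taking the contrapositive, if $s < a$ or $s < b$, then $c = 1$.

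These two directions show the three conditions $c=1$, $s<a$, and $s<b$ are mutually equivalent. In particular, the ``in other words'' clause follows: if $s \ge a$, then $c=0$, hence $s \ge b$ and no carry is generated. This equivalence is what justifies the single comparison $\texttt{simd\_cmp\_lt}(\mathbf{r},\mathbf{a})$ used in Phase~2 of Algorithm~\ref{algo:dot_add}.

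The only delicate point is reading the statement's notation. In the lemma, ``$a+b$'' in the inequalities means the truncated sum $s$, and ``$n$-bit'' refers to the limb width ($k$ in the paper's notation), not the total operand length. I would state this interpretation explicitly at the start. Once it is fixed, each direction is a one-line inequality and nothing substantive remains.
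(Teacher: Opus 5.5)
Your proof is correct and takes essentially the same approach as the paper's. In the carry case, both arguments write the truncated sum as $a+b-2^n$ and use $a,b<2^n$ to place it below both operands. For the converse, both use that without a carry the truncated sum equals $a+b\ge a,b$; the paper phrases this as a contradiction, while you phrase it as a contrapositive, and your explicit identity $a+b=s+c\,2^n$ makes the three-way equivalence slightly cleaner.
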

\begin{proof}
Let $a$ and $b$ be $n$-bit integers such that $0 \leq a,b < 2^n$, and $s = a+b$. 
A carry occurs if  $s \ge 2^n$.
Since $a,b < 2^n$, the maximum value of $s$ is $2^n - 1 + 2^n -1 = 2^{n+1} - 2$, which is strictly less than $2^{n+1}$.
As the sum is $2^n \leq s < 2^{n+1}$, the last $n$-bits of the sum are given by $s \bmod 2^n$, which is $s - 2^n$.
Since $a$ and $b$ are less than $2^n$, the sum is less than both $a$ and $b$.

To show that if $s < a$ and $s < b$, then a carry is generated:\\
Assume $s < a$ and no carry is generated. As no carry is generated, $0 \leq s = a+b < 2^n$. Since $b \geq 0$, $s \geq a$, which is a contradiction.
Similarly, if $s < b$, a carry has to be generated. 
  \end{proof}

\begin{theorem}[Correctness of DoT-Addition]
  Suppose two large integers $A$ and $B$,
  $$A = \sum_{i=0}^{n-1} A_i X^i = A_{n-1}X^{n-1} + \dots + A_1X^1 + A_0X^0$$
  $$B = \sum_{j=0}^{n-1} B_j X^j = B_{n-1}X^{n-1} + \dots + B_1X^1 + B_0X^0$$
  where $X = 2^k$ (where $k$ is the bit size of a limb), the number of limbs for the two integers is $n$,  and $0 \le A_i, B_j < X$.

  Then, if $\textsc{DoT-Add-Words}(A,B,n) = (S, c_{\text{out}})$, then
 $$S = \sum_{i=0}^{n-1} (r_i X^i)$$ such that 
$r_i = (A_i + B_i + c_i) \pmod{X}, c_0 = 0$ and
$c_{i+1} = \left\lfloor (A_i + B_i + c_i)/X \right\rfloor$, 
and $c_{\text{out}} = c_n$
\end{theorem}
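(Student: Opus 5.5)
We prove the statement in two layers. The outer layer is an induction over the chunks processed by \textsc{DoT-Add-Words}. The inner layer is a lemma saying that a single call \textsc{Add-W-Limbs}$(A,B,i,c_{in},w)$ behaves like a $w$-limb schoolbook adder with carry-in $c_{in}$. Concretely, it returns $r'_j=(A_{i+j}+B_{i+j}+\gamma_j)\bmod X$, with $\gamma_0=c_{in}$ and $\gamma_{j+1}=\lfloor (A_{i+j}+B_{i+j}+\gamma_j)/X\rfloor$, together with $c_{out}=\gamma_w$. Given this lemma, the theorem follows by induction on the chunk index $t$: the loop invariant is that after chunk $t$ the variable $c_{out}$ equals $c_{(t+1)w}$ and $S[0..(t+1)w-1]$ holds $r_0,\dots,r_{(t+1)w-1}$. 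Since $c_{in}=0=c_0$ initially and each chunk's carry-in is the previous chunk's $c_{out}$, the recurrences match. A final partial chunk is handled the same way, treating masked-off lanes as absent, i.e. setting $w$ to the remainder.

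For the chunk lemma, we first analyze Phases 1--3. By Lemma~\ref{lem1}, the mask $g_j=[R_j<A_j]$ with $R_j=(A_j+B_j)\bmod X$ is exactly the carry $\lfloor (A_j+B_j)/X\rfloor$; we write $R_j,A_j$ as shorthand for the chunk-local limbs. Phase 2 produces the aligned vector $\mathbf{c}=(c_{in},g_0,\dots,g_{w-2})$ and sets $c_{out}=g_{w-1}$. Phase 3 computes $R'_j=(R_j+\mathbf{c}_j)\bmod X$, and again by Lemma~\ref{lem1} the second mask $h_j=[R'_j<R_j]$ records overflow of this addition. Since $\mathbf{c}_j\le 1$, we have $h_j=1$ iff $R_j=X-1$ and $\mathbf{c}_j=1$. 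A key observation is that $g_j$ and $h_j$ are never both $1$: if $g_j=1$ then $R_j\le X-2$, because $A_j+B_j\le 2X-2$. So the true carry out of limb $j$ is at most $1$, equal to $g_j+h_j$ whenever no further cascading happens. If $h=0$, no cascading can occur, and an easy induction shows $\gamma_j=\mathbf{c}_j$ for $j\ge1$, giving the lemma directly; this is the fast path.

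The main obstacle is Phase 4, where we must show that the Kogge--Stone-style adjustment resolves arbitrary cascades. The plan is to view the masks as $w$-bit integers. Let $G'=h\ll 1$ (generate bits, shifted to their destination lanes) and $P=[R'_j=X-1]$ (propagate bits). The line $\mathbf{c}\leftarrow G'+P$ is an integer addition of bitmasks. Its standard carry-lookahead meaning is that the bit pattern $(G'+P)\oplus P$ marks exactly those lanes that receive a cascaded carry: a lane $j$ receives one iff there is some $\ell<j$ with $G'_\ell=1$ and $P_{\ell},\dots,P_{j-1}$ all set, or $G'_j=1$ directly. We must check that $G'_\ell$ and $P_\ell$ are not simultaneously set in a way that breaks this reading; this needs a small argument. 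If lane $\ell$ overflowed in Phase 3 then $R'_\ell=0\neq X-1$, and lanes with $P_\ell=1$ did not overflow. Bit $w$ of $G'+P$ is the cascade escaping the chunk, which is OR'ed into $c_{out}$. Exclusivity with $g_{w-1}$ follows from the same $g,h$ disjointness argument extended along the propagate run.

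Finally, we add the adjustment bits $\mathbf{m}$ to $R'$. Lanes receiving a carry either had $R'_j=X-1$, so they wrap to $0$ and pass the carry on (which the mask already accounts for), or received exactly one carry and cannot overflow. Comparing with the schoolbook recurrence by induction on $j$ then gives $r'_j$ and $\gamma_w$, which proves the chunk lemma and hence the theorem.
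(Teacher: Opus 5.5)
Your proposal is correct, and it takes a genuinely different route from the paper.

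\textbf{The paper's route.} It inducts on the number of limbs $n$ inside a single call, assuming w.l.o.g.\ $n<8$ and $c_0=0$. It walks through Phases 1--3 for the added top limb. It then defers Phase 4 and the final carry to the cited correctness of the Kogge--Stone adder.

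\textbf{Your route.} You instead induct over the chunks of the outer loop. This forces you to state the chunk lemma with an arbitrary carry-in $c_{in}$, which is exactly what threading carries between calls requires. It also covers $n>w$, a case the paper's w.l.o.g.\ skips. You then prove the chunk lemma directly rather than by citation. In effect, the missing carries $\delta_j=\gamma_j-\mathbf{c}_j$ obey the generate/propagate recurrence $\delta_{j+1}=h_j\lor(P_j\land\delta_j)$ with $\delta_0=0$. The single integer addition $(h\ll 1)+P$ evaluates that recurrence.

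\textbf{Comparison.} The paper's argument is shorter but leans on an external result. Its limb-by-limb step also does not match how the parallel algorithm actually works, since all $w$ lanes are processed at once. Yours is self-contained and follows the algorithm's real structure. It also makes explicit the facts that make the bit trick sound: $g_j\land h_j=0$, and $h_\ell=1$ forces $R'_\ell=0$, hence $P_\ell=0$.

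\textbf{Two points to tighten.}
\begin{enumerate}
\item Fix the indexing of the disjointness condition. The one you need is $G'_{\ell+1}\land P_\ell=0$, i.e.\ $h_\ell\land P_\ell=0$, which is what your argument actually proves. It is not $G'_\ell\land P_\ell=0$. That second condition can fail harmlessly, for example when lane $\ell-1$ overflows in Phase 3 into a lane whose Phase-3 value is $X-1$; the trick handles that case correctly.
\item Show that $\mathbf{c}\gg w\in\{0,1\}$. The same disjointness gives that bit $w{+}1$ of $(h\ll 1)+P$ is zero: the carry into it is $h_{w-1}\land P_{w-1}\land\delta_{w-1}=0$. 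You need this because $c_{out}$ is updated with a bitwise OR.
\end{enumerate}
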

\begin{proof}
  The proof proceeds by induction on the \textsc{DoT-Add-Words} algorithm.\\
  \textbf{Base case} (for n = 1 limb): For the base case, this reduces to the addition of two 64-bit integers.
  In Phase 1, we compute $r_0 = A_0 + B_0$.\\
  In Phase 2, if $r_0 < A_0$, then $c = 1$ else $c=0$. As $w = 1$, $c_{out} = c \gg 0$, i.e., $c_{out} = c$.\\
  In Phase 3, $r_0 = A_0 + B_0 + c_0$. As $c_0 = 0$, $r_0 = A_0 + B_0$. Because, $r_0$ is unchanged, Phase 4 is not triggered.\\
  Thus, the final value is $S = A_0 + B_0$ and $c_{out} = c$. \\
  From Lemma~\ref{lem1}, we know that if carry is generated $S < A_0$ and $S < B_0$.\\
  \textbf{Inductive case} (w.l.o.g, we show for $0<n<8$). \\
  IH: for $n$ limbs, 
  $$S' = \sum_{i=0}^{n-1} (r_i X^i)$$ such that $r_i = (A_i + B_i + c_i) \pmod{X}, c_0 = 0$ and $c_{i+1} = \left\lfloor (A_i + B_i + c_i)/X \right\rfloor$,  and $c_{\text{out}} = c_n$. \\
  To show: for $(n+1)^{th}$ limb, 
  $$S = \sum_{i=0}^{n} (r_i X^i)$$ such that $r_i = (A_i + B_i + c_i) \pmod{X}, c_0 = 0$ and $c_{i+1} = \left\lfloor (A_i + B_i + c_i)/X \right\rfloor$,  and $c_{\text{out}} = c_{n+1}$. \\
  $$S = (r_n X^n) + \sum_{i=0}^{n-1} (r_i X^i)$$
  $$S = (r_n X^n) + S'$$ such that $c_n = \left\lfloor (A_{n-1} + B_{n-1} + c_{n-1})/X \right\rfloor$ \\
  From Phase 1, we have $\forall\, (0 \leq i \leq n).\, r_i = (A_i + B_i)$ \\
  In Phase 2, if $r_n < A_n$, then $c_n = 1$ else $c_n = 0$. As $w = n$, $c_{out} = c \gg  n$, i.e., $c_{out} = c_n$.
  From IH, we have $\forall\, (0 \leq i < n).\, r_i = (A_i + B_i + c_i) \pmod{X}, c_0 = 0$ and $c_{i+1} = \left\lfloor (A_i + B_i + c_i)/X \right\rfloor$. Thus, the new $c_n = \left\lfloor (A_{n-1} + B_{n-1} + c_{n-1})/X \right\rfloor$, which is $1$ if $r_{n-1} < A_{n-1}$ else $0$. \\
  In Phase 3, we have $r_n = A_n + B_n + c_n$.  If this is more than $X$, we get $r_n = (A_n + B_n + c_n) \pmod{X}$, which means a carry is  further generated, i.e., $(A_n + B_n + c_n)\pmod{x} < (A_n + B_n)$ or $\left\lfloor (A_n + B_n + c_n)/X \right\rfloor = 1$ (from Lemma~\ref{lem1}). The Phase 4 triggers when there is some carry generated in one of the limbs. The correctness of Phase 4 and the final carry generation follows from the correctness of Kogge-Stone Adder~\cite{ksaproof}.\\ 
  Thus, we have for $n+1$ limbs,
  $$S = r_n X^n + \sum_{i=0}^{n-1} (r_i X^i)$$ such that $r_i = (A_i + B_i + c_i) \pmod{X}, c_0 = 0$ and $c_{i+1} = \left\lfloor (A_i + B_i + c_i)/X \right\rfloor$,  and $c_{\text{out}} = c_{n+1}$
\end{proof}

\paragraph{Proof of correctness of vertical and crosswise multiplication}
The \textsc{DoT-Mul-Words} algorithm provides the pseudocode for implementing vertical and crosswise multiplication on SIMD machines. As the registers can accommodate $32$- or $64$-bit numbers, the results are stored separately in \emph{lo} and \emph{hi} registers corresponding to the $128$-bit product. We show the correctness of the vertical and crosswise multiplication below: 
\begin{theorem}[Correctness of vertical and crosswise multiplication]
  \label{thm:mul}
  Suppose two large integers $A$ and $B$,
  $$A = \sum_{i=0}^{n-1} A_i X^i = A_{n-1}X^{n-1} + \dots + A_1X^1 + A_0X^0$$
  $$B = \sum_{j=0}^{n-1} B_j X^j = B_{n-1}X^{n-1} + \dots + B_1X^1 + B_0X^0$$
  where $X = 2^k$ (where $k$ is the bit size of a limb), the number of limbs for the two integers is $n$,  and $0 \le A_i, B_j < X$.
  Then $A \times B = \textsc{DoT-Mul-Words}(A, B, n, k)$
\end{theorem}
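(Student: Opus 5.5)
We prove that the value returned by \textsc{DoT-Mul-Words}, read as $\sum_c P[c]X^c$ plus the prepended limb, equals $A\times B$. The approach is to expand the product into column sums and show that each phase of Algorithm~\ref{algo:ut_mul} preserves the integer value $\sum_c V_c X^c$, where $V_c$ is the current content of column $c$.

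\textbf{Step 1 (algebra).} By distributivity,
\[
A\times B=\sum_{i=0}^{n-1}\sum_{j=0}^{n-1}A_iB_jX^{i+j}=\sum_{c=0}^{2n-2}X^c\sum_{i+j=c}A_iB_j .
\]
This regrouping by $c=i+j$ is exactly $\textsc{VnC}(A,B)$. The pairs with $i+j=c$ number $\min(c+1,n,2n-1-c)$, and every pair $(i,j)$ lands in exactly one column. This already gives Theorem~\ref{thm:mul-correct}.

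\textbf{Step 2 (Phases 1--2).} Phase~1 enumerates each $(i,j)$ exactly once, grouped by column. Hence the flat index $t$ in Phase~3 matches the index used in Phase~1, because both loops use the same per-column counts. For each $t$, Phase~2 computes $P\_lo[t]$ and $P\_hi[t]$ with $M_a[t]M_b[t]=P\_lo[t]+P\_hi[t]X$. This uses $0\le P\_lo[t]<X$ and the fact that the product is below $X^2$, so it splits into a $k$-bit low half and a high half. Running the SIMD loop in chunks of $w$ does not change any individual product.

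\textbf{Step 3 (Phases 3--4).} After Phase~3, $col\_lo[c]=\sum_{t\in c}P\_lo[t]+\sum_{t\in c-1}P\_hi[t]$. Therefore
\[
\sum_c col\_lo[c]X^c=\sum_t\bigl(P\_lo[t]+P\_hi[t]X\bigr)X^{c(t)}=A\times B
\]
by Steps~1 and~2. Phase~4 only copies these values.

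\textbf{Step 4 (Phase 5).} Phase~5 is ordinary carry normalization. By induction on $c$, maintain the invariant
\[
\sum_{c'<c}P[c']X^{c'}+\mathit{carry}\cdot X^{c}+\sum_{c'\ge c}col\_lo[c']X^{c'}=A\times B .
\]
Each step replaces $P[c]+\mathit{carry}$ by its residue mod $X$ and carries the quotient upward, which preserves the total. At termination, the final carry becomes the extra top limb. Each $P[c]<X$, so the output is a valid radix-$X$ representation of $A\times B$.

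\textbf{Main obstacle: overflow in the accumulators.} The identity in Step~3 holds over the integers, so I must check that no machine word wraps around. Column $c$ accumulates at most $n$ low halves and at most $n$ high halves, each below $X$, so $col\_lo[c]<2nX$. For the paper's sizes ($k=52$, $n\le 5$, 64-bit words) this is below $2^{56}$ and fits comfortably. Phase~5 then adds a carry below $2n$, which also fits. I would therefore state the theorem under the explicit hypothesis $2nX+2n\le 2^{64}$, i.e.\ enough headroom for the unsaturated radix. For the $4\times4$ saturated case, I would reduce to this setting through the radix conversion to $k=52$. Justifying that conversion, and noting that the high-half convention of \texttt{vpmadd52huq} matches $P\_hi=\lfloor M_aM_b/X\rfloor$, is the delicate part rather than the algebra.
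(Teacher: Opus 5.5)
Your argument is correct, and it is a more complete proof than the paper's. The paper's proof is essentially your Step~1 alone. It expands $A\times B=\sum_{i}\sum_{j}A_iB_jX^{i+j}$, writes \textsc{DoT-Mul-Words}$(A,B,n,k)$ directly as the column-grouped sum $A_0B_0+(A_1X\cdot B_0+A_0\cdot B_1X)+\cdots$, and notes that the two agree. In effect it treats the algorithm as a name for $\textsc{VnC}$ and never examines Phases~1--5. Your Steps~2--4 supply what that leaves implicit. Phases~1 and~3 traverse the same flat index with the same per-column counts $\min(c+1,n,2n-1-c)$. Writing each product as $P\_lo[t]+P\_hi[t]X$ and depositing the high half in column $c+1$ preserves $\sum_c V_cX^c$. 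Phase~5 is value-preserving normalization; you could add that the final carry is necessarily $0$, because $A\times B<X^{2n}$. Your invariant-based route gives a proof about the algorithm itself rather than about an algebraic identity. It also surfaces a hypothesis the theorem silently relies on: for general $k$ (for instance $k=64$ with 64-bit accumulators) the Phase~3 sums wrap. The claim therefore holds only under exact integer arithmetic or with headroom such as your $2nX+2n\le 2^{64}$. The paper's route buys brevity, and it suffices if the pseudocode is read over unbounded integers. Under that reading your Steps~1--4 already prove the statement for every $k$. The overflow and radix-conversion points then become implementation obligations outside the theorem, not gaps in your proof.
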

\begin{proof}
  Without loss of generality, we assume that both $A$ and $B$ have the same number of limbs (this can be done by padding the smaller number with required number of 0s on the MSBs. 
  \begin{align}
    \begin{aligned}
    A \times B =&  \sum_{i=0}^{n-1} A_i X^i \times \sum_{j=0}^{n-1} B_j X^j\\
               =&(A_{n-1}X^{n-1} + \dots + A_1X^1 + A_0X^0) \times \\
               & B_{n-1}X^{n-1} + \dots + B_1X^1 + B_0X^0\\
               =&(A_{n-1}X^{n-1}. B_{n-1}X^{n-1} + A_{n-1}X^{n-1}. B_{n-2}X^{n-2} + ... + \\
                &A_0X^0. B_0X^0\\
    =&\sum_{i=0}^{n-1} \sum_{j=0}^{n-1} A_i.B_j X^{i+j} \\
  \end{aligned}
  \end{align}
  \begin{align}
    \begin{aligned}
      &\textsc{DoT-Mul-Words}(A, B, n, k) \\
      &\quad= A_0X^0. B_0X^0 + (A_1X^1. B_0X^0 + A_0X^0. B_1X^1) \\
      &\quad\phantom{{}=} + \dots + (A_{n-1}X^{n-1}.B_{n-2} X^{n-2} \\
      &\quad\phantom{{}=} \quad + A_{n-2}X^{n-2}.B_{n-1} X^{n-1}) \\
      &\quad\phantom{{}=} + (A_{n-1}X^{n-1}.B_{n-1} X^{n-1})\\
      &\quad=\sum_{j=0}^{n-1}\sum_{i=0}^{n-1}  A_i X^i B_j X^j\\
    \end{aligned}
  \end{align}
  Thus, $\textsc{DoT-Mul-Words}(A, B, n, k) =  A \times B$
  \end{proof}

\paragraph{Probability of carry generation in large-number addition}

\begin{lemma}[Probability of a maxed-out limb sum]
\label{prop:maxsum}
Let $X_i$ and $Y_i$ be independent random variables drawn uniformly from $\{0, 1, \ldots, 2^k{-}1\}$. The probability that their sum equals the maximum representable value without overflow is
\[
  \Pr\!\left[X_i + Y_i = 2^k - 1\right] = 2^{-k}.
\]
\end{lemma}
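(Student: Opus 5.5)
The plan is a direct counting argument over the uniform product space. Since $X_i$ and $Y_i$ are independent and each is uniform on $\{0,1,\ldots,2^k-1\}$, the pair $(X_i,Y_i)$ is uniform on the grid $\{0,\ldots,2^k-1\}^2$. Each of its $2^{2k}$ points has probability $2^{-2k}$. So it suffices to count the pairs $(x,y)$ in this grid with $x+y=2^k-1$, where the sum is the exact integer sum. Note that $x+y\le 2^{k+1}-2$, so the event is well defined and involves no wraparound.

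For the count, condition on $X_i=x$. For each $x\in\{0,\ldots,2^k-1\}$ there is exactly one candidate partner, $y=2^k-1-x$. The one point to check is that this $y$ always lies in the allowed range: from $0\le x\le 2^k-1$ we get $0\le 2^k-1-x\le 2^k-1$. Hence the solution set $\{(x,\,2^k-1-x)\}$ has exactly $2^k$ elements. Equivalently, and more cleanly,
\[
\Pr[X_i+Y_i=2^k-1]=\sum_{x=0}^{2^k-1}\Pr[X_i=x]\,\Pr[Y_i=2^k-1-x]=2^k\cdot 2^{-k}\cdot 2^{-k}=2^{-k},
\]
where independence is what lets each term factor.

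There is no real obstacle here. The only step needing care is the range check above, which is what makes the count exactly $2^k$; for any target value other than $2^k-1$ some $x$ would have no valid partner. I would also remark, connecting to Lemma~\ref{lem1} and Phase~4 of Algorithm~\ref{algo:dot_add}, that this event is precisely the condition under which the Phase~1 sum $R_i$ equals the mask value $2^k-1$ with no Phase~1 carry. An incoming carry in Phase~3 then produces a new carry, which is the case the subsequent cascade-probability bound will use.
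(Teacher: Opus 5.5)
Your proof is correct and is essentially the paper's argument: the paper also counts the $2^k$ favorable pairs $(j,\,2^k-1-j)$ among the $2^{2k}$ equally likely outcomes, using the same range check that the partner $2^k-1-j$ always lies in $\{0,\ldots,2^k-1\}$. Your extra remarks are accurate but not needed for the lemma itself: the factorized sum over $x$ using independence, and the observation that $R_i = 2^k-1$ forces the exact, carry-free sum $A_i+B_i = 2^k-1$.
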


\begin{proof}
The sample space is $\{0,\ldots,2^k{-}1\}^2$ with $2^{2k}$ equally likely outcomes. For the event $X_i + Y_i = 2^k{-}1$, fixing any $X_i = j$ uniquely determines $Y_i = 2^k{-}1{-}j$. Since $0 \le j \le 2^k{-}1$ implies $0 \le 2^k{-}1{-}j \le 2^k{-}1$, this value of $Y_i$ is always valid. There are therefore exactly $2^k$ favorable pairs, one per value of $X_i$, giving
\[
  \Pr\!\left[X_i + Y_i = 2^k - 1\right] = \frac{2^k}{2^{2k}} = 2^{-k}. \qedhere
\]
\end{proof}

For $k = 64$ this is $\approx 5.4 \times 10^{-20}$: a maxed-out limb sum is astronomically rare. Carries themselves, however, are common:

\begin{lemma}[Probability of a carry out of a single limb]
\label{prop:carry}
Under the same conditions,
\[
  \Pr\!\left[X_i + Y_i \geq 2^k\right] = \frac{1}{2} - 2^{-(k+1)}.
\]
\end{lemma}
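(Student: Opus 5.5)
Count the favourable pairs directly, in the same style as the proof of Lemma~\ref{prop:maxsum}. The sample space is $\{0,\ldots,2^k{-}1\}^2$, with $2^{2k}$ equally likely outcomes. I count the pairs with $X_i + Y_i \ge 2^k$ by fixing $X_i = j$. The condition becomes $Y_i \ge 2^k - j$, and $Y_i$ must also satisfy $Y_i \le 2^k - 1$. For $j = 0$ there are no valid $Y_i$. For $1 \le j \le 2^k - 1$ the valid values are $Y_i \in \{2^k - j, \ldots, 2^k - 1\}$, which is exactly $j$ values.

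The number of favourable pairs is therefore
\[
  \sum_{j=0}^{2^k-1} j = \frac{(2^k-1)2^k}{2}.
\]
Dividing by $2^{2k}$ gives
\[
  \Pr\!\left[X_i + Y_i \ge 2^k\right] = \frac{2^k - 1}{2^{k+1}} = \frac{1}{2} - 2^{-(k+1)},
\]
which is the claimed value.

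As an independent check, I would also give a symmetry argument. The involution $(x,y) \mapsto (2^k{-}1{-}x,\; 2^k{-}1{-}y)$ preserves the uniform distribution. It maps $x+y$ to $2^{k+1} - 2 - (x+y)$, so it exchanges the event $\{x+y \ge 2^k\}$ with the event $\{x+y \le 2^k - 2\}$. These two events therefore have equal probability. Together with the event $\{x+y = 2^k-1\}$ they partition the space. Lemma~\ref{prop:maxsum} gives the middle event probability $2^{-k}$, so each of the other two has probability $(1 - 2^{-k})/2$, which agrees with the count above.

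The main obstacle is only getting the boundary of the count right: the $j=0$ term contributes nothing, and the upper limit of $Y_i$ is $2^k - 1$, not $2^k$. The symmetry argument guards against an off-by-one error there. I would also note that this count concerns a carry generated in Phase~1, without any incoming carry.
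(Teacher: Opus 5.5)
Your proof is correct and is essentially the paper's argument: an exact count over the $2^{2k}$ equally likely pairs. The paper instead counts the complementary non-carry pairs by sum value $s$, obtaining $\sum_{s=0}^{2^k-1}(s+1) = 2^k(2^k+1)/2$, and subtracts from $2^{2k}$, whereas you count the carry pairs directly by fixing $X_i = j$; your symmetry cross-check via Lemma~\ref{prop:maxsum} is a sound extra that the paper does not include.
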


\begin{proof}
Count non-carry pairs, i.e., those with $X_i + Y_i \le 2^k{-}1$. For each value $s \in \{0,\ldots,2^k{-}1\}$, there are $s+1$ pairs that sum to exactly $s$, so the total number of non-carry pairs is
\[
  \sum_{s=0}^{2^k-1}(s+1) = \frac{2^k(2^k+1)}{2}.
\]
The carry-producing pairs number $2^{2k} - \frac{2^k(2^k+1)}{2} = \frac{2^k(2^k-1)}{2}$. Dividing by $2^{2k}$:
\[
  \Pr\!\left[X_i + Y_i \geq 2^k\right] = \frac{2^k - 1}{2^{k+1}} = \frac{1}{2} - 2^{-(k+1)}. \qedhere
\]
\end{proof}

As $k \to \infty$ this approaches $\frac{1}{2}$: roughly every other limb addition produces a carry. The crucial distinction from Proposition~\ref{prop:maxsum} is that while carries are frequent, carry \emph{propagation chains} require the intermediate sum $R_i$ to be maxed out: which is exponentially rare.

\begin{corollary}[Phase~4 is negligible for random inputs]
  \label{cor:cascade}
  Phase~4 of \textsc{DoT-Add} (Algorithm~\ref{algo:dot_add}) is triggered at
  limb $i$ only when the Phase~1 output $R_i = 2^k{-}1$ and the propagated
  carry into that limb is $1$, so that Phase~3's addition overflows. By
  Proposition~\ref{prop:maxsum}, $\Pr[R_i = 2^k{-}1] = 2^{-k}$. Since limbs are
  independent, the occurrence of a carry cascade for $n$-limb addition is at most $(2^{-k})^n$, 
  which is exponentially small.
  
\end{corollary}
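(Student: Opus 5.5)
The plan is to argue in three steps. First, characterize exactly when a lane overflows in Phase~3. Second, bound the probability of that lane event using Lemma~\ref{prop:maxsum}. Third, combine lanes using the independence of disjoint limb pairs.

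For the first step, fix a lane $i$ of \textsc{Add-W-Limbs}. After Phase~1 we have $R_i = (A_i + B_i) \bmod 2^k$, and after Phase~2 the aligned carry entering lane $i$ is some $c \in \{0,1\}$ (either $c_{in}$ or the Phase~1 carry of lane $i-1$). Phase~3 computes $R_i' = (R_i + c) \bmod 2^k$. By Lemma~\ref{lem1}, the comparison $R_i' < R_i$ holds iff $R_i + c \ge 2^k$. Since $R_i \le 2^k - 1$ and $c \le 1$, this happens iff $R_i = 2^k - 1$ and $c = 1$. Phase~4 is entered exactly when some lane satisfies this, which gives the first claim of the corollary.

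For the second step, I compute $\Pr[R_i = 2^k-1]$ for uniform independent $A_i, B_i$. The modular sum equals $2^k - 1$ iff $A_i + B_i = 2^k - 1$, because the only other candidate, $A_i + B_i = 2^{k+1} - 1$, exceeds the maximum $2^{k+1} - 2$. Lemma~\ref{prop:maxsum} then gives probability exactly $2^{-k}$. This also bounds the joint event with $c = 1$, since $\Pr[R_i = 2^k-1 \wedge c = 1] \le \Pr[R_i = 2^k-1]$.

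For the third step, a carry cascade through $n$ limbs requires every intermediate sum along the chain to be maxed out, i.e.\ the event $\bigcap_{i} \{R_i = 2^k - 1\}$ over those $n$ limbs. Each $R_i$ is a function of the pair $(A_i, B_i)$ alone, and distinct limb pairs are independent under the random-input model. So the probability of the intersection factors as $\prod_i 2^{-k} = (2^{-k})^n$, and the cascade probability is at most this quantity. As a complementary remark, I would also note the weaker statement that Phase~4 fires at all with probability at most $w \cdot 2^{-k}$, by a union bound over the $w$ lanes. This is still negligible for $k = 64$ and is the quantity actually relevant to runtime.

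The main obstacle is pinning down what ``carry cascade for $n$-limb addition'' means, so that the product bound is honest. The carry $c$ into lane $i$ is correlated with limb $i-1$, but it is not correlated with $(A_i, B_i)$. I therefore intend to state the cascade event purely in terms of the $R_i$'s, using the fact that $R_i = 2^k - 1$ is necessary for each link of the chain. I would then upper-bound by dropping the carry conditions, rather than trying to compute the exact joint law of the carries.
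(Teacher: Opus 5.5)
Your proposal is correct and follows the same route as the paper, whose only justification is the reasoning embedded in the corollary itself: Phase~3 overflows in lane $i$ exactly when $R_i = 2^k-1$ and the incoming carry is $1$, Lemma~\ref{prop:maxsum} gives $2^{-k}$, and independence of limbs gives the product bound; you add the care of checking that the modular sum $R_i$ equals $2^k-1$ only when the integer sum does. Your union-bound remark is a worthwhile sharpening, because the $(2^{-k})^n$ figure bounds a cascade through all $n$ limbs, whereas the probability that Phase~4 fires at all is of order $w\cdot 2^{-k}$ per call, and it is this latter bound that actually supports the claim that Phase~4 is negligible.
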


\section{Additional Plots and Results on Intel Xeon Gold 6548Y+ (Emerald Rapids)}
\label{sec:appendix-er}

Figures~\ref{fig:add_suB_speedup_dotyc_pathological}
and~\ref{fig:add_suB_dot_simd_speedup_suB_pathological} show DoT's performance
on pathological inputs, complementing the random-input results in the main
text. DoT performs comparably to the two-level KSA and Ren et al.'s method on
pathological inputs, confirming that its speedup on random inputs does not come
at the cost of worse performance on worst-case inputs. The SIMD speedup plot
confirms that AVX512 still achieves a significant speedup over scalar even on
pathological inputs, while SSE and AVX2 underperform due to carry-management
overhead dominating at narrower widths.

\noindent\textbf{DoT's Impact on Higher-Level Applications.} Since OpenS\-SL's
speed benchmark reports throughput in terms of operations per second, we
further compare the latency distributions of DoTSSL and OpenSSL for RSA
sign/verify, FFDH derive, and DSA sign/verify
(Figure~\ref{fig:openssl_speed_latency_cdf}). DoTSSL consistently achieves
lower latency across all operations. For instance, we observe median (P50)
latency improvements reaching up to $7.9\%$ for DSA (2048-bit verify), $6.1\%$
for FFDH derive (2048-bit), and $5.5\%$ for RSA sign (4096-bit). These latency
distributions confirm that the throughput gains observed earlier stem directly
from faster individual operations, showing consistent performance even at tail
percentiles (e.g., $7.8\%$ and $7.7\%$ improvements at the 95th and 99th
percentiles for DSA 2048-bit verify).

\begin{figure}[t]
        \centering
        \includegraphics[width=\linewidth]{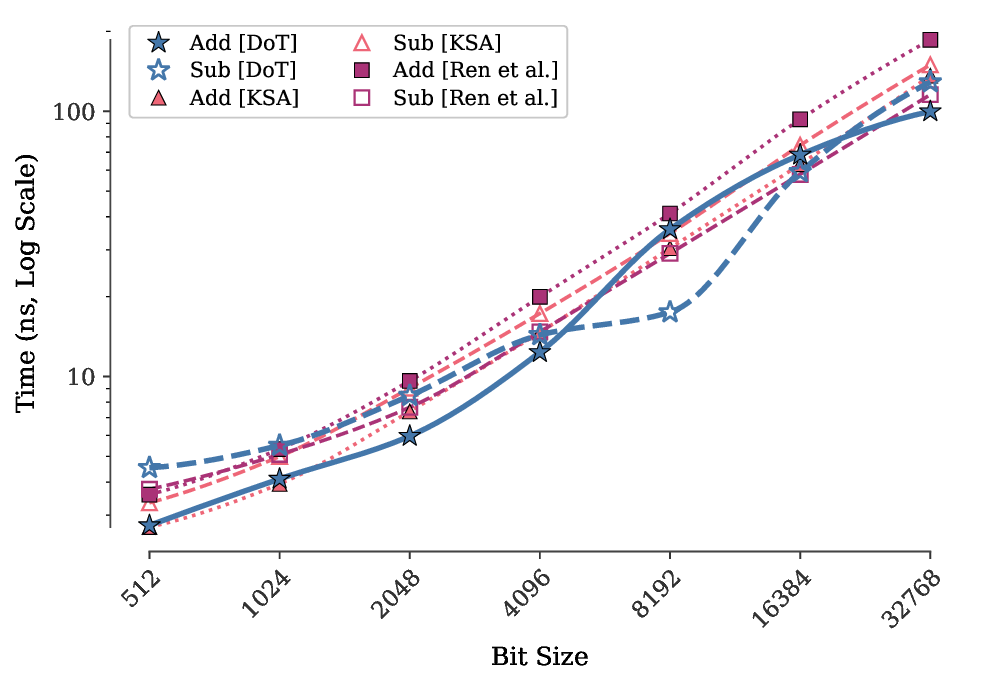}
        \caption{\small{Execution time (normalized, lower is better) of DoT (AVX512), two-level KSA (\texttt{add512}/\texttt{sub512}), and Ren et al.'s ProposedAdd/ProposedSub for addition and subtraction across 512--32768-bit pathological operands.}}
        \label{fig:add_suB_speedup_dotyc_pathological}
\end{figure}

\begin{figure}[t]
        \centering
        \includegraphics[width=\linewidth]{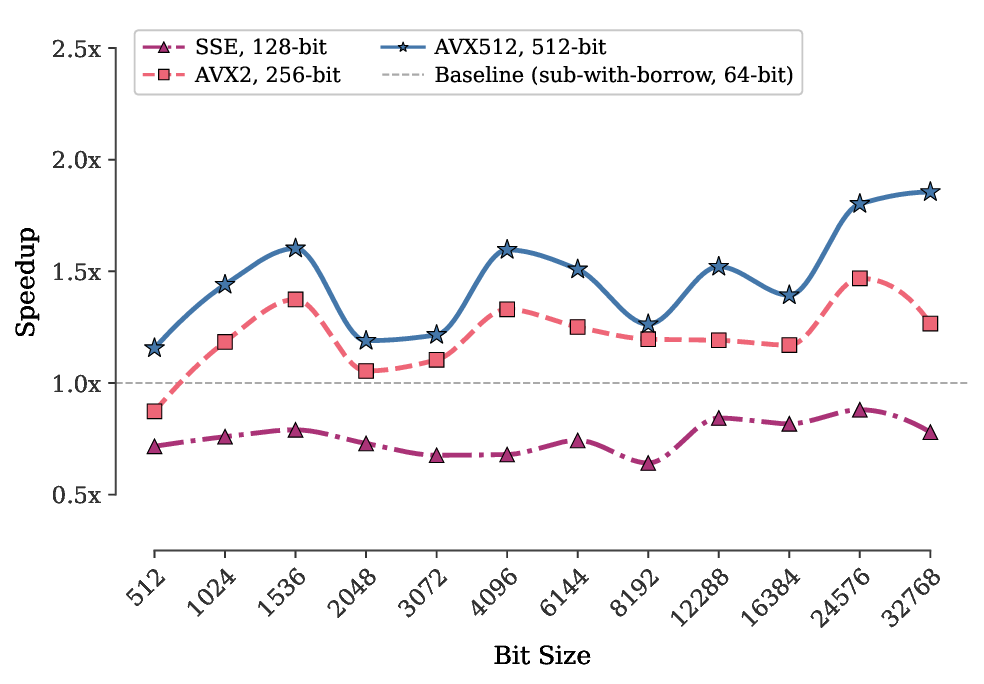}
        \caption{\small{Speedup of DoT SIMD variants ($w{=}2$ SSE, $w{=}4$ AVX2, $w{=}8$ AVX512) over scalar \texttt{\_subborrowx\allowbreak\_u64} for subtraction across 512--32768-bit pathological operands.}}
        \label{fig:add_suB_dot_simd_speedup_suB_pathological}
\end{figure}

\begin{figure}[!t]
    \centering
    \includegraphics[width=\linewidth]{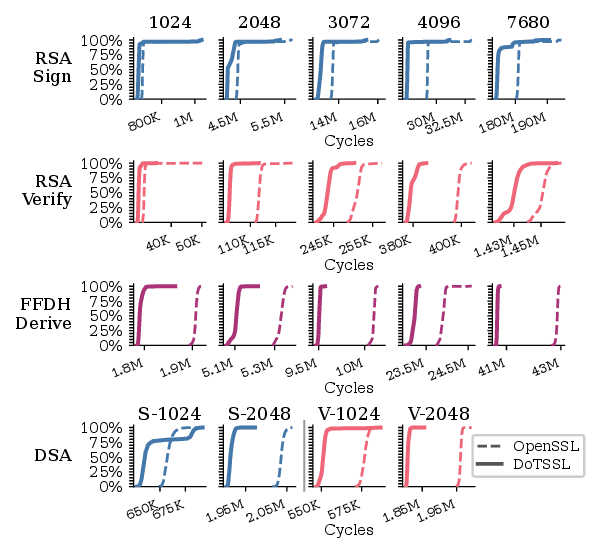}
    \caption{\small{Latency CDFs of DoTSSL vs. OpenSSL for RSA sign/verify, FFDH derive, and DSA sign/verify across the evaluated key sizes. Cycles are measured via \texttt{RDTSC}.}}
    \Description{CDF plot comparing the latency (in CPU cycles) of DoTSSL and OpenSSL for RSA sign, RSA verify, FFDH derive, DSA sign, and DSA verify operations. Each operation has two lines representing DoTSSL and OpenSSL. The x-axis is in log scale showing cycles, and the y-axis shows the cumulative probability. DoTSSL lines are consistently to the left of OpenSSL lines, indicating lower latency across all operations.}
    \label{fig:openssl_speed_latency_cdf}
\end{figure}

\section{Results on Intel Xeon Max 9462 (Sapphire Rapids)}
\label{sec:appendix-spr}

This appendix presents the full evaluation results on the Intel Xeon Max 9462 CPU (Sapphire Rapids microarchitecture, SPR), complementing the 6548Y+ (Emerald Rapids, ER) results reported in the main paper. The experimental setup, workloads, and methodology are identical to those described in Section~\ref{sec:evaluation}. All SPR experiments were compiled with \texttt{clang 14.0} using \texttt{-O2 -march=native}.

\subsection{DoT Addition and Subtraction vs. Prior Works}

Figure~\ref{fig:spr_combined_microbench}(a) shows the execution time comparison of DoT (AVX512) against the two-level KSA and Ren et al.'s method on SPR for random test cases. The trends closely mirror those on ER.

Compared to the two-level KSA, DoT (AVX512) achieves a geomean speedup of $1.4\times$ for addition ($1.23\times$ for smaller operands, $1.73\times$ for larger) and $1.4\times$ for subtraction ($1.12\times$ for smaller, $1.73\times$ for larger). Compared to Ren et al., DoT achieves $1.82\times$ geomean speedup for random addition and $1.45\times$ for random subtraction. Instruction count reductions over Ren et al.\ are $29.1\%$ for random addition and $17.2\%$ for random subtraction, consistent with ER results.

\begin{figure*}[htbp]
    \centering
    \includegraphics[width=\linewidth]{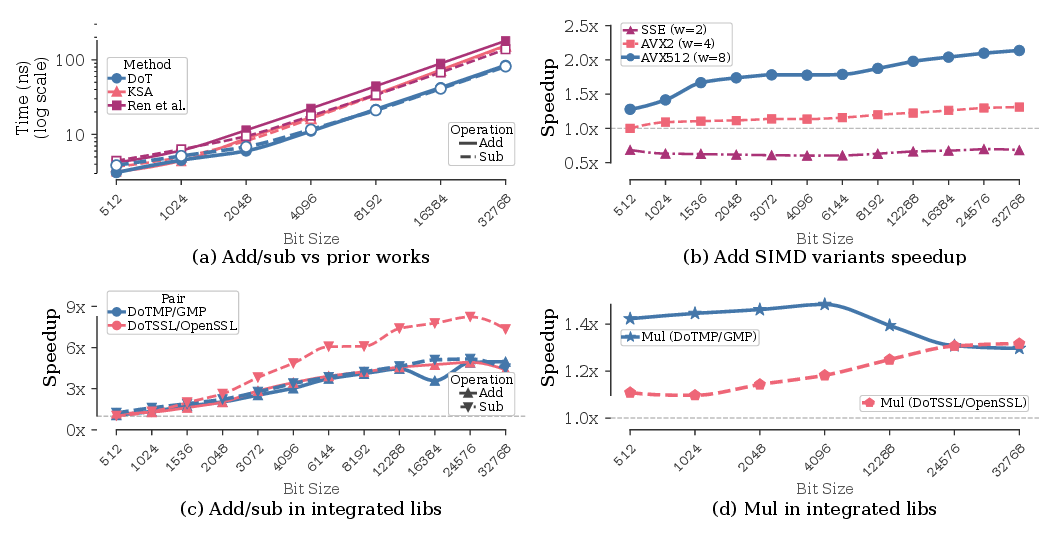}
    \caption{\small{Micro-benchmark evaluation of DoT on the Intel Xeon Max 9462 (SPR). (a) Execution time (log scale) of DoT (AVX512), two-level KSA, and Ren et al. across 512--32768-bit random operands. (b) Speedup of DoT SIMD variants ($w{=}2$ SSE, $w{=}4$ AVX2, $w{=}8$ AVX512) over scalar \texttt{\_addcarryx\_u64} for addition. (c) Timing speedup of DoTMP over GMP and DoTSSL over OpenSSL for addition and subtraction. (d) Timing speedup of DoTMP over GMP and DoTSSL over OpenSSL for multiplication.}}
    \label{fig:spr_combined_microbench}
\end{figure*}

\begin{figure}[htbp]
    \centering
    \includegraphics[width=\linewidth]{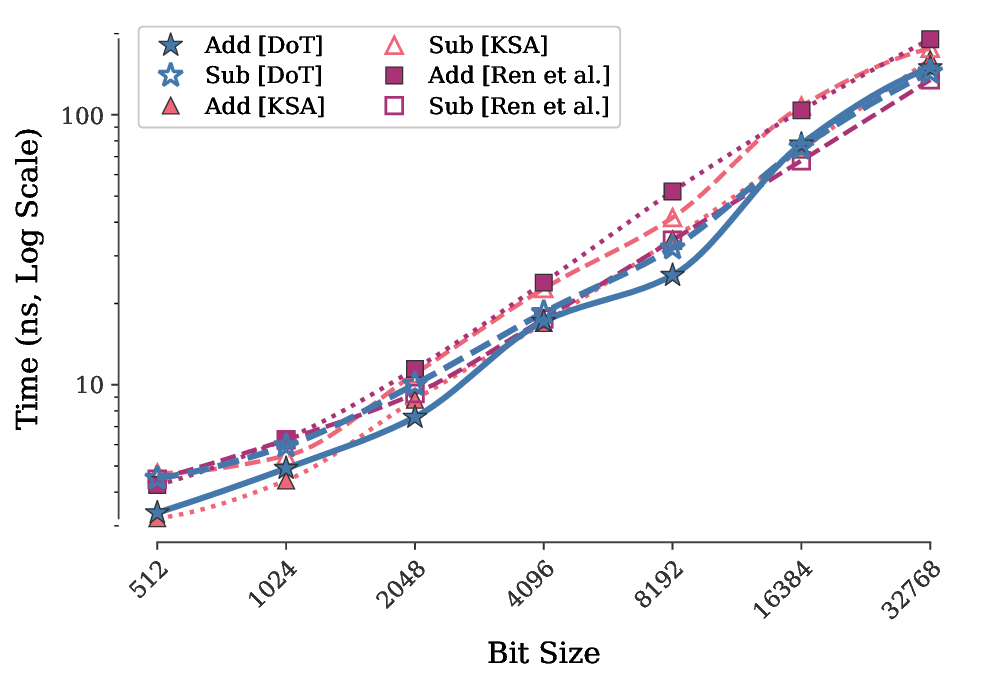}
    \caption{\small{Execution time (normalized, lower is better) of DoT (AVX512), two-level KSA, and Ren et al.'s method for addition and subtraction across 512--32768-bit pathological operands on the Intel Xeon Max 9462 (SPR), pathological test cases.}}
    \label{fig:spr_dot_yc_ren_timing_pathological}
\end{figure}

\subsection{DoT's Performance over SIMD Widths}

Figure~\ref{fig:spr_combined_microbench}(b) shows the timing speedup of DoT SIMD variants over scalar add-with-carry on SPR. With SSE ($w=2$), geomean speedups are $0.6\times$ for addition and $0.7\times$ for subtraction. With AVX2 ($w=4$), speedups are $1.2\times$ for both. AVX512 ($w=8$) achieves $1.78\times$ for addition and $1.77\times$ for subtraction, slightly below ER's $1.85\times$/$1.84\times$, consistent with SPR's higher base clock and different memory subsystem characteristics.

\begin{figure}[htbp]
    \centering
    \includegraphics[width=\linewidth]{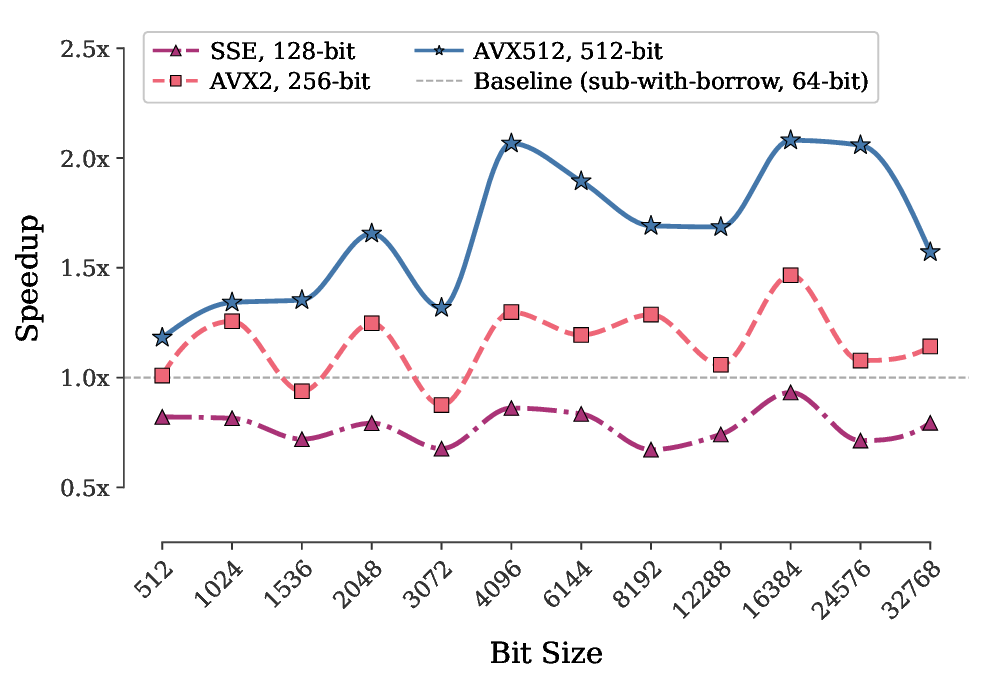}
    \caption{\small{Speedup of DoT SIMD variants ($w{=}2$ SSE, $w{=}4$ AVX2, $w{=}8$ AVX512) over scalar \texttt{\_subborrowx\allowbreak\_u64} for subtraction across 512--32768-bit pathological operands  on the Intel Xeon Max 9462 (SPR).}}
    \label{fig:spr_dot_simd_speedup_sub}
\end{figure}

\subsection{256-bit Multiplication with DoT} 

Compared to the ER results, SPR shows a similar pattern of DoT multiplication outperforming both GMP and OpenSSL baselines: Table~\ref{tab:256_mul} shows that DoT's $4\times4$ multiplication routine achieves similar speedups over these baselines on SPR as on ER, confirming that the multiplication gains are architecture-independent and primarily driven by reduced dependency chains rather than memory subsystem differences.

\begin{table}[htbp]
\centering
\footnotesize
\begin{tabular}{lrrr}
\toprule
\textbf{Implementation} & \textbf{Instructions} & \textbf{Avg.\ Cycles} & \textbf{IPC} \\
\midrule
\texttt{dot\_mul\_${5\times5}$} & 221 & 36.0 & 6.1 \\
\texttt{dot\_mul\_${4\times4}$} & 265 & 47.1 & 5.6 \\
OpenSSL \texttt{BN\_mul}     & 655 & 86.0 & 7.6 \\
Gueron \& Krasnov~\cite{gueron2016accelerating} & 350 & 103.5 & 3.3 \\
GMP \texttt{mpz\_mul}        & 872 & 111.4 & 7.8 \\
\bottomrule
\end{tabular}
\caption{\small{Instruction counts, average cycle counts, and IPC for 256-bit multiplication on the Intel Xeon Max 9462 (SPR). Cycle counts were obtained via \texttt{RDTSC}/\texttt{RDTSCP} (minimum 700 million total cycles per reported value, averaged over millions of iterations.)}}
\label{tab:256_mul_max}
\end{table}

\subsection{DoT's Performance over GMP and OpenSSL}

\paragraph{Addition and Subtraction}
Figure~\ref{fig:spr_combined_microbench}(c) shows the speedup of DoTMP over GMP and DoTSSL over OpenSSL on SPR. For addition, DoTMP achieves a $2.80\times$ geomean speedup over GMP ($1.84\times$ for smaller operands, $4.25\times$ for larger, peak $4.98\times$), with instruction count reductions averaging $64.7\%$ (up to $78.5\%$). DoTSSL achieves $2.92\times$ geomean over OpenSSL ($1.92\times$ small, $4.45\times$ large, peak $4.93\times$), with instruction count reductions of $49.8\%$ on average. For subtraction, DoTMP achieves $3.07\times$ geomean ($2.06\times$ small, $4.55\times$ large, peak $5.17\times$) and DoTSSL achieves $4.00\times$ geomean ($2.25\times$ small, $7.10\times$ large, peak $8.24\times$). These results are lower than ER for DoTMP (ER: $3.81\times$/$3.73\times$) but comparable for DoTSSL, consistent with SPR's architectural differences in SIMD throughput.

\paragraph{Multiplication}
Figure~\ref{fig:spr_combined_microbench}(d) shows multiplication speedu\-ps on SPR. DoTMP achieves $1.40\times$ geomean over GMP ($1.30\times$--$1.48\times$, instruction count reduction $47.2\%$ on average). DoTSSL achieves $1.20\times$ geomean over OpenSSL ($1.10\times$--$1.32\times$, instruction count reduction $41.6\%$). Both are consistent with ER results ($1.41\times$ and $1.20\times$), confirming that multiplication gains are architecture-indepen\-dent.

\subsection{DoT's Impact on Higher-Level Applications}

\paragraph{GMPbench}
Figure~\ref{fig:spr_gmpbench} shows DoTMP's improvement on GMPbench on SPR. The overall score improves by $6.2\%$. The multiply aggregate improves by $12.7\%$ (individual cases up to $40.0\%$), divide by $6.9\%$ (up to $30.3\%$), and pi by $10.1\%$ (up to $13.7\%$ for the 1M-digit case). GCD and GCDext improve by $2.6\%$ and $2.2\%$ respectively. RSA improves by $2.8\%$. These gains are consistently lower than ER (ER: $7.8\%$ overall) but follow the same workload-dependent pattern.

\begin{figure*}[htbp]
    \centering
    \includegraphics[width=\linewidth]{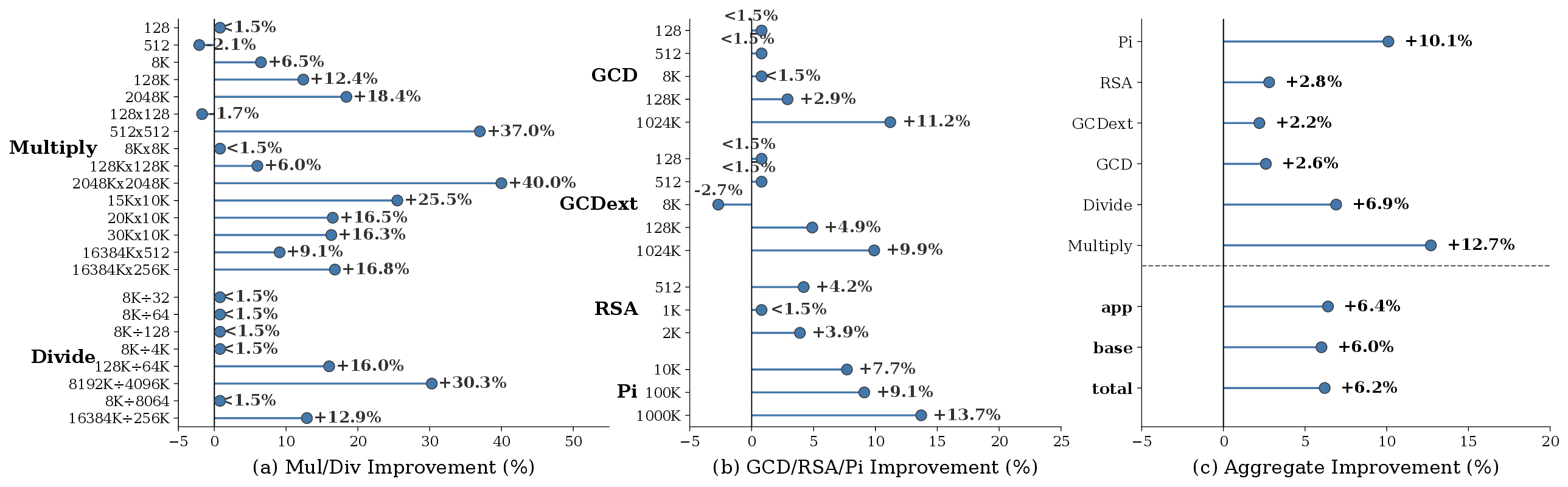}
    \caption{\small{DoTMP's percentage improvement over GMP across GMPbench workloads on the Intel Xeon Max 9462 (SPR). Overall score improves by $6.2\%$, with multiply ($+12.7\%$) and pi ($+10.1\%$) leading, following the same workload-dependent pattern as ER but at modestly lower absolute gains.}}
    \label{fig:spr_gmpbench}
\end{figure*}

\paragraph{OpenSSL Speed}
Figure~\ref{fig:spr_openssl_speed} shows DoTSSL's throughput improvements on OpenSSL speed on SPR. Improvements are generally higher than on ER: RSA averages $3.9\%$ across sign, verify, encrypt, and decrypt (up to $6.0\%$ for 4096-bit encrypt). FFDH averages $5.4\%$ (up to $7.2\%$ for 4096-bit groups). DSA sign averages $4.4\%$ and verify $5.4\%$ (up to $6.9\%$ for 2048-bit verify). The stronger gains on SPR reflect its higher base frequency making the relative cost of scalar carries more pronounced.

Similarly, the latency distributions (Figure~\ref{fig:spr_openssl_latency_cdf}) show that DoTSSL consistently achieves lower latency across all operations on SPR. For instance, we observe median (P50) latency improvements reaching up to $6.1\%$ for DSA (2048-bit verify), $6.7\%$ for FFDH derive (4096-bit), and $5.8\%$ for RSA sign (4096-bit). These latency improvements maintain consistent performance even at tail percentiles (e.g., $6.2\%$ and $6.1\%$ improvements at the 95th and 99th percentiles for DSA 2048-bit verify).

\paragraph{DoT's Contribution to these gains}
Similar to ER, we used \texttt{perf} to analyze the cycle composition of DoT's routines in GMPbench and OpenSSL speed on SPR. Figure~\ref{fig:dot_composition_spr} shows that DoT's \texttt{dot\_add\_words}, \texttt{dot\_sub\_words}, and \texttt{dot\_mul\_4x4} routines account for a significant share of cycles in both GMPbench and OpenSSL speed workloads, confirming that DoT's improvements in these core operations are driving the overall performance gains observed in higher-level applications on SPR as well.

\begin{figure*}[!b]
  \centering
  \includegraphics[width=0.96\linewidth]{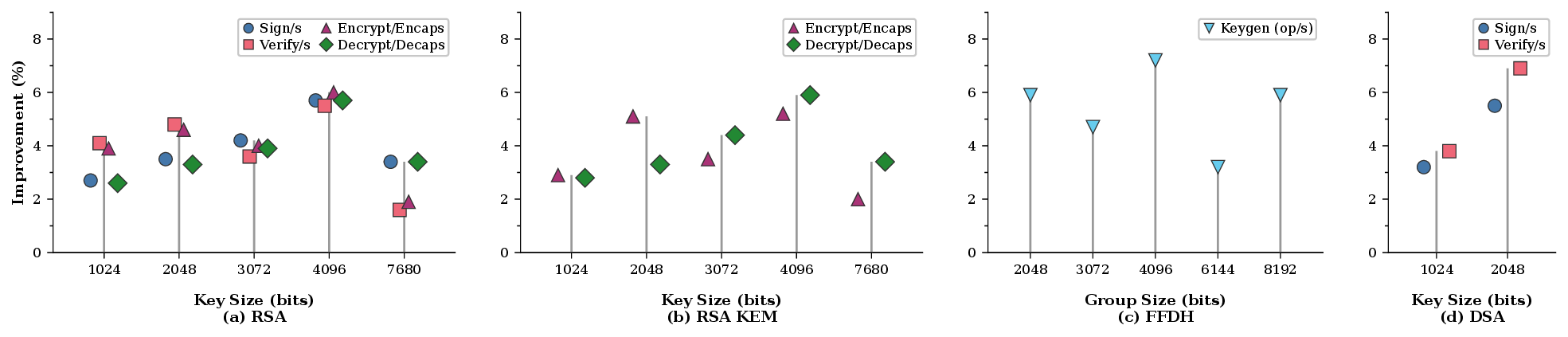}
  \caption{\small{DoTSSL throughput improvement (\%) over OpenSSL for RSA, RSA KEM, FFDH, and DSA on the Intel Xeon Max 9462 (SPR). Improvements are generally higher than on ER: FFDH reaches up to $+7.2\%$ and DSA verify up to $+6.9\%$, reflecting SPR's higher base frequency amplifying the relative cost of scalar carry chains.}}
  \label{fig:spr_openssl_speed}
\end{figure*}

\begin{figure*}[!t]
  \centering
  \includegraphics[width=0.92\linewidth]{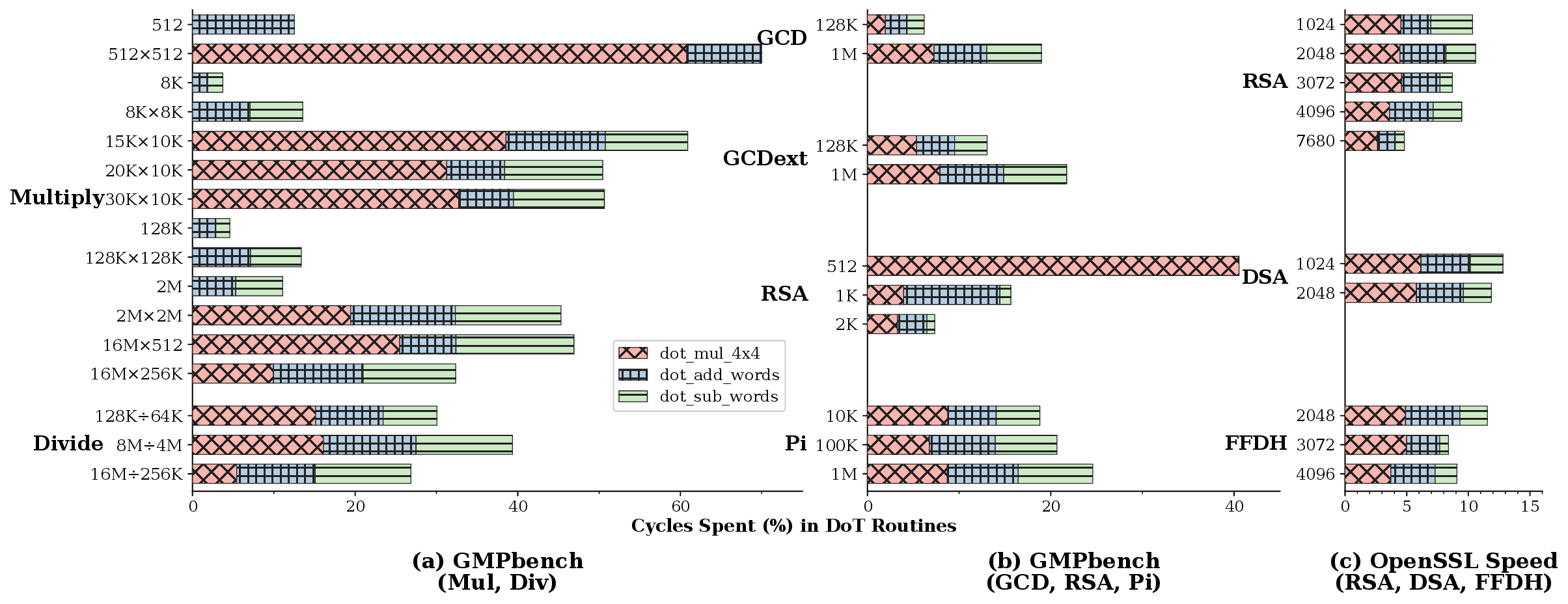}
  \caption{\small{Cycle spent (\%) by DoT's \texttt{dot\_add\_words}, \texttt{dot\_sub\_words}, and \texttt{dot\_mul\_4x4} routines in GMPbench and OpenSSL speed workloads, measured via \texttt{perf} on the Intel Xeon Max 9462 (SPR). We omitted handful of cases in the GMPbench (e.g., lower sized mul, div and gcd) since they spend zero cycles in DoT routines. Additionally, OpenSSL speed benchmarks keygen, sign, encrypt, decrypt, etc. in aggregate for each key size; thus we report the cycle share for DoT routines in the aggregate of all operations for each key size.}}
  \label{fig:dot_composition_spr}

  \vspace{0.2cm}

  \includegraphics[width=0.40\linewidth]{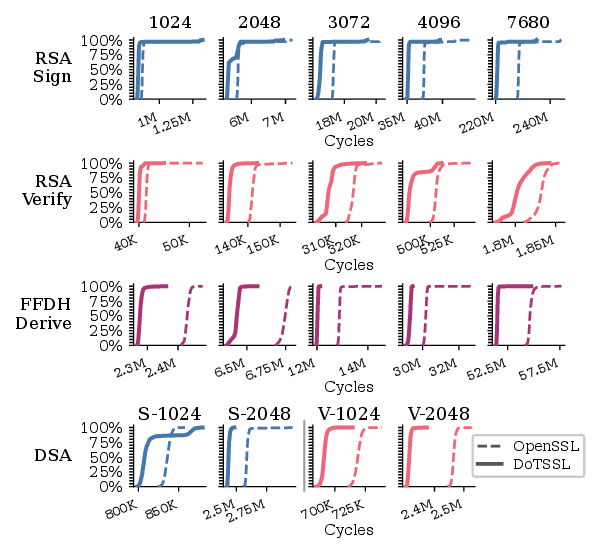}
  \caption{\small{Latency comparison (CDF) of DoTSSL vs. OpenSSL for RSA sign/verify, FFDH derive, and DSA sign/verify. Cycles are measured via \texttt{RDTSC} on the Intel Xeon Max 9462 (SPR) and plotted on a log scale.}}
  \label{fig:spr_openssl_latency_cdf}
\end{figure*}

\end{document}